\documentclass[12pt]{article}

\usepackage[top=1in, bottom=1.2in, left=1in, right=1in]{geometry} 

\setlength{\parskip}{1ex plus 0.3ex minus 0.2ex}

\usepackage[ansinew]{inputenc}

\usepackage{makeidx}
\usepackage{textcomp}

\usepackage{amsmath,amsfonts,amssymb,amsthm,amscd,mathrsfs}
\usepackage{bbm}

\usepackage[all]{xypic}
\usepackage{url}
\usepackage{epsfig}
\usepackage{longtable}
\usepackage{subfigure}
\usepackage{algorithm}
\usepackage{algorithmic}

\usepackage{bbm}

\clubpenalty = 10000 \widowpenalty = 10000

\usepackage{setspace}

\usepackage{fancyhdr} 

\newtheorem{thm}{Theorem}[section]
\newtheorem{cor}[thm]{Corollary}
\newtheorem{lem}[thm]{Lemma}

\theoremstyle{definition}
\newtheorem{defn}[thm]{Definition}

\theoremstyle{definition}

\theoremstyle{remark}

\begin{document}

\title{\textbf{Privacy-Preserving Data Sharing for Genome-Wide Association Studies}}
 \author{Caroline Uhler\thanks{Institute of Science and Technology Austria, Am Campus 1,  3400 Klosterneuburg, Austria,  Email: caroline.uhler@ist.ac.at}, Aleksandra B. Slavkovi\'c\thanks{Department of Statistics, Department of Public Health Sciences, Penn State University, University Park, PA 16802 USA, Email:  sesa@psu.edu }, Stephen E.~Fienberg\thanks{Department of Statistics, Machine Learning Department,  Cylab, and Living Analytics Research Centre, Heinz College, Carnegie Mellon University, Pittsburgh, PA 15213-3890, USA, Email: fienberg@stat.cmu.edu}}
 

\date{}
\maketitle
\thispagestyle{empty}

\begin{abstract}

Traditional statistical methods for confidentiality protection of statistical databases do not scale well to deal with GWAS (genome-wide association studies) databases especially in terms of guarantees regarding protection from linkage to external information. The more recent concept of \emph{differential privacy},  introduced by the cryptographic community, is an approach which provides a rigorous definition of privacy with meaningful privacy guarantees in the presence of arbitrary external information, although the guarantees come at a serious price in terms of data utility. 
Building on such notions, we propose new methods to release aggregate GWAS data without compromising an individual's privacy. We present methods for releasing differentially private minor allele frequencies, chi-square statistics and $p$-values. We compare these approaches on simulated data and on a GWAS study of canine hair length involving 685 dogs. We also propose a privacy-preserving method for finding genome-wide associations based on a differentially-private approach to penalized logistic regression.\\ 

{\bf Key Words: } chi-squared statistics; contingency tables; differential privacy; genome-wide association studies (GWAS); logistic regression; $p$-values; single nucleotide polymorphism (SNP).
\end{abstract}



\section{Introduction}
In an article that shocked the genetics community, Homer et al.~\cite{Homer} claimed that, under certain conditions, they could use statistical methods to ``accurately and robustly [resolve]" the presence of an individual with known genotype in a mix of DNA samples from which only the minor allele frequencies (MAFs) are known. Their approach compared the MAFs of a specific individual to the distribution of MAFs in a reference population and the distribution of MAFs in a test population and then used a $t$-test to assess if the individual was part of the test population.

Although proposed specifically for use in a forensic context and only secondarily for breaking privacy, the  Homer et al.~\cite{Homer} ``attack" appeared to be generally applicable.  As a reference population one might use the publicly available single nucleotide polymorphism (SNP) data from the HapMap project\footnote{http://hapmap.ncbi.nlm.nih.gov/} which consists of SNP data from 4 populations varying in size from 45 to 90 individuals. Note that the HapMap data set does not contain any information regarding the health status of the individuals. For the test population one might use  the cases in genome-wide association studies (GWAS), which contain both genotype data and disease status. Before the appearance of the \mbox{article \cite{Homer},} the averaged MAFs of the cases and the averaged MAFs of the controls in a GWAS were typically publicly available. 

In response to Homer et al.~\cite{Homer}, Braun et al.~\cite{Braun} showed that their proposed test  depends heavily on the assumption that the genotypes of the test population, the reference population and the specific person under consideration are samples from the same underlying population and that the SNPs used in the study are independent (i.e., that there is no linkage disequilibrium present). These assumptions are usually not met in practice, and as a consequence, the Homer et al. attack lead to a high false-positive rate, see e.g.~Braun et al.~\cite{Braun}.  Others have criticized Homer et al., suggested alternative formulations of the identification problem, claimed to strengthen the attack or suggested different ways to protect the data, e.g., see~\cite{Clayton,Jacobs,Malin,Lumley,Masca,P3G,Jordan,Visscher,Zhou}.  Despite the apparent limitations of the Homer et al.~attack on the privacy of GWAS participants and the controversial and, we believe, exaggerated nature of their statistical claims, NIH immediately removed from open-access databases all aggregate results such as values of averaged MAFs over cases and controls, chi-square ($\chi^2$)-statistics and $p$-values (see Couzin~\cite{Couzin} and Zerhouni and Nabel~\cite{NIHResponse}). The NIH policy  remains in effect today. Every researcher, who wants to gain access to any of these data sets, needs to go through an elaborate approval process. This is a particularly difficult obstacle for computer scientists, mathematicians or statisticians who do not have a credible record in GWAS research. 

Here we propose methods which allow for the release of aggregate GWAS data without compromising an individual's privacy, and in many ways totally bystep the debate on the validity of the claims by Homer et al.~\cite{Homer} and others on the vulnerability of GWAS databases.  Our GWAS privacy guarantees utilize the concept of \emph{differential privacy}, recently introduced by the cryptographic community (e.g., Dwork et al.~\cite{Dwork}).  Differential privacy provides a rigorous definition of privacy with meaningful privacy guarantees in the presence of arbitrary external information. Our contributions  are as follows:
\begin{itemize}
\item We propose a method for the release of the averaged MAFs for the cases and for the controls in GWAS without compromising an individual's privacy.
\item We compute $\epsilon$-differentially private $\chi^2$-statistics and $p$-values and provide a differentially private algorithm for releasing these statistics for the most relevant SNPs.
\item Conditions such as cancer, heart disease, and diabetes are caused by the interaction of various genes and possibly the environment. Detecting such interaction among SNPs related to a specific phenotype (i.e.,~epistasis) is  a main goal of GWAS. Most methods for finding epistasis are based on a two-stage approach: (1)  Filtering all SNPs, e.g.,  using $\chi^2$-statistics or a simple logistic regression, to reduce the potentially interacting SNPs to a small number; (2) Further examining the loci achieving some threshold  for interactions. For example, Park and Hastie \cite{Park} use a form of  penalized logistic regression to test for detecting gene-gene interactions on a small number of SNPs. By adapting the work of \cite{Bhaskar} and \cite{Chaudhuri} to this methodology, we derive a privacy-preserving method for GWAS, where both stages in the two-stage approach satisfy $\epsilon$-differential privacy.
\end{itemize}

Section 2 describes the basic problem and relevant definitions. In Section 3, we present methods for releasing $\epsilon$-differentially private MAFs, $\chi^2$-statistics and $p$-values, and in {Section 4} we evaluate their statistical utility on data based on a simulation study and on a GWAS study of canine hair length involving 685 dogs. In Section 5, we propose a differentially-private  method for finding genome-wide associations based on a penalized approach to logistic regression.

\section{Main Definitions and Notation}
In a typical GWAS setting, we study the interaction between various SNPs and a binary phenotype, as for 
example the disease status of an individual. The binary phenotype takes values 0 (e.g.,~non-diseased) and 1 (e.g.,~diseased). We denote the total number of individuals in a GWAS by $N$ and assume throughout the paper that the number of cases and controls is equal, i.e.,~there are $N/2$ cases and $N/2$ controls. This corresponds to the usual setting in GWAS and is necessary in order to achieve sufficient power to detect SNPs which are associated with a disease. We denote the total number of SNPs in a GWAS by $M'$ and the number of SNPs for which we would like to release aggregate data by $M$. We assume that the SNPs are polymorphic with only two possible nucleotides.  The SNPs therefore take values 0, 1, and 2 representing the number of minor alleles. We summarize the data for each SNP in a $3\times 2$ contingency table, 
where the count in cell $(i,j)$ consists of the number of individuals with genotype $i$ and disease status $j$. 
We also assume throughout the paper that all margins of such a $3\times 2$ contingency table are positive. This is motivated by the fact that in GWAS usually all SNPs with a MAF smaller than $0.05$ are removed from the study.  

\begin{defn}
A randomized mechanism $\mathcal{K}$ is $\epsilon$-\emph{differentially private} if, for all data sets $D$ and $D'$ which differ in at most one individual and for any $t\in\mathbb{R}$,
$$\frac{\textrm{Pr}(\mathcal{K}(D)=t)}{\textrm{Pr}(\mathcal{K}(D')=t)}\leq e^{\epsilon}.$$
\end{defn}

\begin{defn}
The \emph{sensitivity} of a function \mbox{$f:\mathcal{D}^N\to\mathbb{R}^d$,} where $\mathcal{D}^N$ denotes the set of all databases with $N$ individuals, is the smallest number $S(f)$ such that 
$$|\!|f(D)-f(D')|\!|_1\leq S(f),$$
for all data sets $D, D'\in \mathcal{D}^N$ differing in a single individual.
\end{defn}

Releasing $f(D)+b$, where $b$ is random noise drawn from a Laplace distribution with mean 0 and scale $\frac{S(f)}{\epsilon}$ satisfies the definition of  $\epsilon$-differential privacy (e.g., see \cite{Dwork}). This type of release mechanism is often referred to as the {\em Laplace mechanism.}

\begin{defn} The KL divergence between two probability distributions $f$ and $g$ is defined by
\begin{equation}
D_{KL}(f||g) = \int^{\infty}_{-\infty}f(x)\,log\frac{f(x)}{g(x)}dx.
\end{equation}
\end{defn}
For the analysis of the simulation results in Section 3 we use the KL divergence to measure the difference between two distributions such as the original $\chi^2$-statistic and
its corresponding $\epsilon$-differentially private version.

\section{Privacy-Preserving Methodology} 
In this section we compute the sensitivity of MAFs,  $\chi^2$-statistics and $p$-values needed to release the private versions of these statistics for each SNP via the Laplace mechanism. 
We also describe an $\epsilon$-differentially private algorithm for the release of the latter two quantities for the $M$ most relevant SNPs. 

\subsection{Privacy-Preserving Release of Aggregate MAFs}

We now describe a method for releasing the averaged MAFs for the cases and for the controls in GWAS which satisfies differential privacy. The true data form a table consisting of the MAFs of the cases and the controls for $M$ SNPs; e.g., see Table \ref{table_MAF}. In the following, we compute the amount of Laplace noise we need to add to such a table in order to satisfy $\epsilon$-differential privacy. 

\begin{table}[!b]
\caption{Table showing the averaged MAFs of the cases and the controls for $M$ SNPs.} 
\vspace{0.5cm}
\label{table_MAF} 
\centering 
\begin{tabular}{|c|c|c|c|c|}
\hline \textbf{MAF} & \textbf{SNP 1} & \textbf{SNP 2} & $\cdots$ & \textbf{SNP $M$} \\ \hline
\textbf{Cases} & 0.29 & 0.20 & $\cdots$ & 0.11 \\ \hline
\textbf{Controls} & 0.27 & 0.31 & $\cdots$ & 0.10 \\ \hline
\end{tabular}
\end{table}

\begin{lem}\label{lem_maf}
The sensitivity of the averaged MAFs of the cases and the controls based on $N$ individuals,  with $N/2$ cases and $N/2$ controls, for $M$ SNPs is $\frac{2M}{N}$.
\end{lem}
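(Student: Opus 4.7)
The plan is to invoke the definition of sensitivity directly after writing down the MAF as a simple sum over individuals in each group. Concretely, index individuals by $i=1,\ldots,N$, and for each SNP $k\in\{1,\ldots,M\}$ let $X_{i,k}\in\{0,1,2\}$ be individual $i$'s genotype (the count of minor alleles). Since each of the $N/2$ individuals in a group contributes two alleles, the averaged MAF for group $g\in\{\text{cases},\text{controls}\}$ at SNP $k$ is
$$\mathrm{MAF}_{g,k}(D)\;=\;\frac{1}{N}\sum_{i\in g}X_{i,k}.$$
The function whose sensitivity is sought is $f:\mathcal{D}^N\to\mathbb{R}^{2M}$ concatenating these $2M$ averaged MAFs.

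Next I would compare two neighboring databases $D$ and $D'$ that differ only in the record of a single individual $i^\star$. Under the paper's fixed $N/2$ cases and $N/2$ controls structure, $i^\star$ belongs to the same group in both databases, so only one group's MAFs are affected. For each SNP $k$,
$$\bigl|\mathrm{MAF}_{g,k}(D)-\mathrm{MAF}_{g,k}(D')\bigr|=\frac{1}{N}\bigl|X_{i^\star,k}-X'_{i^\star,k}\bigr|\le\frac{2}{N},$$
because $X_{i^\star,k},X'_{i^\star,k}\in\{0,1,2\}$. The MAFs for the other group are identical in $D$ and $D'$. Summing the $L_1$ contributions over all $2M$ output coordinates gives
$$\|f(D)-f(D')\|_1\;\le\;\sum_{k=1}^M\frac{2}{N}+\sum_{k=1}^M 0\;=\;\frac{2M}{N},$$
so $S(f)\le 2M/N$.

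Finally, to conclude equality rather than just an upper bound, I would exhibit an explicit pair $D,D'$ achieving the bound: take $i^\star$ to have genotype $0$ at every SNP in $D$ and genotype $2$ at every SNP in $D'$, keeping all other records identical. Then each of the $M$ affected coordinates changes by exactly $2/N$, giving $\|f(D)-f(D')\|_1=2M/N$.

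There is really no serious obstacle here; the only subtlety worth flagging in the write-up is the normalization (divide by $N$, not by $N/2$, because MAF counts alleles rather than individuals) and the implicit convention that neighboring databases in this fixed-margin GWAS setting preserve each individual's case/control label, so a single-row change moves a unit of minor-allele count in exactly one of the two groups.
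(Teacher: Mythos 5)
Your proposal is correct and follows essentially the same argument as the paper: a single-individual change affects only one group's $M$ coordinates, each by at most $2/N$ (since a genotype changes by at most two minor alleles out of $N$ total in that group), and the $L_1$ norm sums these to $2M/N$. The only additions beyond the paper's proof are the explicit tightness example and the explicit note that the other group's coordinates contribute zero, both of which the paper leaves implicit.
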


\begin{proof}
Without loss of generality, we can assume that the individual, whose genotype we can change, belongs to the cases. Denote this individual  by $j$. For a given SNP we denote the number of minor alleles of individual $i$ before adding noise by $a_i$ and the perturbed counts by $a'_i$. Note that $a_i=a'_i$ for all $i\neq j$. In addition, $|a_j-a'_j|\leq 2$. Therefore, for a given SNP we can compute  the sensitivity of the averaged MAF as follows:
$$
\Bigg|\frac{1}{N/2}\sum_{i=1}^{N/2} \frac{a_i}{2} - \frac{1}{N/2}\sum_{i=1}^{N/2} \frac{a'_i}{2}\Bigg| = \frac{1}{N/2}\Bigg| \frac{a_j}{2} -  \frac{a'_j}{2}\Bigg| \leq \frac{2}{N}.
$$
This holds for every SNP. As a consequence, for $M$ SNPs the sensitivity is $\frac{2M}{N}$, namely the 1-norm of the $M$-dimensional vector where all entries are $\frac{2}{N}$.
\end{proof}


Lemma~\ref{lem_maf} shows that a data release mechanism that adds Laplace noise with mean 0 and scale $\frac{2M}{N\epsilon}$  to each cell entry in Table \ref{table_MAF} yields $\epsilon$-differential privacy. This result can be seen as a special case of Example 3 in \cite{Dwork} where every cell entry is a histogram by itself. 

Similarly, if instead of releasing the averaged MAFs, we want to release $M$ $3\times 2$ tables containing the counts for each genotype and disease status, the sensitivity would be $2M$. Therefore, we have to add Laplace noise with mean 0 and scale $\frac{2M}{\epsilon}$ to ensure $\epsilon$-differential privacy.

\subsection{Privacy-Preserving Release of $\chi^2$-Statistics and $p$-Values}

In many GWAS settings, researchers report the $\chi^2$-statistics and the $p$-values of the most relevant SNPs.   We   propose a method for releasing these quantities in a differential privacy-preserving way, by first computing the sensitivity and then modifying a method proposed in \cite{Bhaskar}, for release of frequent itemsets, to release the noisy statistics corresponding to the most relevant SNPs.

\begin{thm}
\label{thm_chi}
The sensitivity of the $\chi^2$-statistic based on a $3\times 2$ contingency table with positive margins and $N/2$ cases and $N/2$ controls is $\frac{4N}{N+2}$.
\end{thm}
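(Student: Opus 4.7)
The plan is to exploit the fixed $N/2$-versus-$N/2$ column split to reduce $\chi^2$ to a sum of three one-row terms, and then maximize the change induced by altering a single individual's genotype. Using $E_{ij} = R_i C_j/N = R_i/2$ together with $n_{i2} = R_i - n_{i1}$, a short algebraic simplification yields
\begin{equation*}
\chi^2 \;=\; \sum_{i=0}^{2} \frac{(n_{i1} - n_{i2})^2}{n_{i1} + n_{i2}},
\end{equation*}
a sum of three terms, one per genotype row.

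Two neighboring databases with the prescribed case/control split and positive margins differ in the genotype of one individual. By the case/control symmetry of $\chi^2$ and the relabeling symmetry among the three genotype rows, it suffices to bound the change when one case moves from genotype $0$ to genotype $1$. Writing the six cell counts as $(a,b,c;\,d,e,f)$ with $a+b+c = d+e+f = N/2$, this move sends $a \to a-1$ and $b \to b+1$ while leaving the third row untouched. With $g(x,y) = (x-y)^2/(x+y)$, the induced change decomposes as $\Delta = \Delta_0 + \Delta_1$, where $\Delta_0 = g(a-1,d)-g(a,d)$ and $\Delta_1 = g(b+1,e)-g(b,e)$; the sensitivity is the supremum of $|\Delta|$ over admissible six-tuples.

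A direct expansion gives $\Delta_0 = (u^2 - 2R_0 u + R_0)/[R_0(R_0-1)]$ and $\Delta_1 = (2R_1 v + R_1 - v^2)/[R_1(R_1+1)]$, where $u = a-d$, $v = b-e$, $R_0 = a+d$, and $R_1 = b+e$. As a quadratic in $u$, $\Delta_0$ is decreasing on the feasible interval $u \in [2-R_0,\,R_0]$ (the lower endpoint from $a \geq 1$, needed so that $D'$ still has positive margins), so it is maximized at $u = 2-R_0$, i.e.\ $a=1$, yielding $(3R_0-4)/R_0$. This is increasing in $R_0$, and $R_0 = 1+d \leq (N+2)/2$ because $d \leq N/2$; hence $\max \Delta_0 = (3N-2)/(N+2)$. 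Likewise $\Delta_1$ is a downward-opening quadratic in $v$ with maximum at $v = R_1$, i.e.\ $e = 0$, giving $\Delta_1 = 1$. Choosing, for example, $a=1$, $d=N/2$, $e=f=0$, $b=1$, $c=N/2-2$ realizes both optima simultaneously while preserving positive margins in both $D$ and $D'$, so $\Delta = (3N-2)/(N+2) + 1 = 4N/(N+2)$ is attained, and the reverse transition produces $-4N/(N+2)$.

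The main obstacle I expect is the joint-feasibility verification at the end: the per-piece maxima of $\Delta_0$ and $\Delta_1$ are derived under separate constraints, and one must check that the coupled constraints (nonnegativity of all six cells, column sums equal to $N/2$, and positive row margins in both $D$ and $D'$) still permit simultaneous attainment. Once that is done, the five remaining genotype-transition types within cases, together with the analogous transitions within controls, all reduce to the same computation via the row-relabeling and case/control symmetries of $\chi^2$.
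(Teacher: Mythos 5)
Your proof is correct, and while it shares the paper's basic reductions---using the equal column margins to write $\chi^2$ as a sum of three per-row terms of the form $(n_{i1}-n_{i2})^2/(n_{i1}+n_{i2})$, and using row-relabeling and case/control symmetry to reduce to a single canonical transition---the way you locate the maximum is genuinely different. The paper computes the \emph{directional derivative} of $\chi^2(a,b,m,n)$ in the direction of the unit step, argues it is maximized at a corner of the domain, and then evaluates the finite difference there; this is a first-order heuristic, since a point maximizing the derivative need not maximize the discrete one-step difference, and the paper does not close that gap. You instead compute the \emph{exact} finite difference, observe it splits as $\Delta=\Delta_0+\Delta_1$ with each piece depending only on one row's counts, maximize each piece exactly as a quadratic in $u=a-d$ (resp.\ $v=b-e$) over the feasible interval, and then verify that a single admissible table (with positive margins in both $D$ and $D'$) attains both maxima simultaneously, so that $\sup(\Delta_0+\Delta_1)=\sup\Delta_0+\sup\Delta_1=\frac{3N-2}{N+2}+1=\frac{4N}{N+2}$; the reverse transition handles the absolute value. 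Amusingly, your extremizing pair of tables differs from the paper's (yours concentrates $N/2-2$ cases in the untouched third row, the paper's in the incremented row), which is consistent since the row-$1$ contribution equals $1$ whenever $e=0$; both attain the bound. The net effect is that your argument is more elementary and actually supplies the upper-bound verification that the paper's derivative-based argument only sketches. The only points worth stating explicitly are that positivity of the margins of $D'$ forces $R_0\ge 2$ (so the denominator $R_0(R_0-1)$ is positive) and that the construction requires $N\ge 6$ so that the cell $c=N/2-2$ is positive, neither of which affects the result.
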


\begin{proof}
Consider the following  $3\times 2$ contingency table with positive margins and $N/2$ cases and controls each: 
$$\begin{tabular}{c c | c c |} 
&& \multicolumn{2}{c}{Disease Status}\\ \hline
&& 0 & 1 \\ \hline 
No. Individuals &0 & a & m-a \\ 
With Genotype &1 & b & n-b \\ 
&2 & N/2-a-b & N/2-m-n+a+b \\ \hline
Total && N/2 &N/2\\ \hline
\end{tabular}$$
\vspace{0.1cm}

\noindent with $a,b\geq 0$, $m,n>0$, $a\leq m$, $b\leq n$, $a+b\leq N/2$, and $m+n<N$.    Let 
\begin{eqnarray*}
\mathcal{D}&=&\{(a,b,m,n)\in \mathbb{N}\mid m>0,\, n>0,\, a\leq m,\, b\leq n,\\ &&\,\,\, a+b\leq N/2,\, m+n<N\}.\end{eqnarray*} Then we can view the $\chi^2$-statistic   as a function
$$\chi^2: \mathcal{D}\longrightarrow\mathbb{R}_{\geq0},$$
where $(a,b,m,n)$ gets mapped to the $\chi^2$-statistic of the corresponding contingency table. The sensitivity corresponds to  the values of $(a,b,m,n)\in\mathcal{D}\cap\{a\geq1\}$, which maximize
$$|\chi^2(a,b,m,n)-\chi^2(a-1,b+1,m-1,n+1)|.$$
Our approach is to compute the sensitivity by maximizing the directional derivative of $\chi^2(a,b,m,n)$ in direction $(-1/2, 1/2, -1/2, 1/2)$. First note that
\begin{eqnarray*}\chi^2(a,b,m,n)&=&\frac{(2a-m)^2}{m}+\frac{(2b-n)^2}{n}\\ &&+\frac{(2a-m+2b-n)^2}{N-m-n}.\end{eqnarray*}
We then compute the directional derivative of $\chi^2(a,b,m,n)$ in direction $(-1/2, 1/2, -1/2, 1/2)$. It is given by
$$\frac{2a^2}{m^2}-\frac{4a}{m}-\frac{2b^2}{n^2}+\frac{4b}{n}.$$
Over $\mathcal{D}\cap\{a\geq1\}$ this is maximized by the smallest possible value of $a$, the largest possible value of $m$, the largest possible value of $b$ and the smallest possible value of $n$. Consequently, the sensitivity is given by:
$$\Bigg|\chi^2\left(\begin{bmatrix} 1 & N/2 \\ N/2-2 & 0 \\ 1 & 0 \end{bmatrix}\right) - \chi^2\left(\begin{bmatrix} 0 & N/2 \\ N/2-1 & 0 \\ 1 & 0 \end{bmatrix}\right) \Bigg|,$$
which we can   easily see to be $\frac{4N}{N+2}$.
\end{proof}

Note that the sensitivity of the $\chi^2$-statistic grows as a function of  $N$, but is asymptotically constant. This is interesting since the $\chi^2$-statistic for a table with fixed frequencies grows proportional to $N$. In order to achieve $\epsilon$-differential privacy for releasing the $\chi^2$-statistic for a single SNP, we need to add Laplace noise with scale $\frac{1}{\epsilon}\frac{4N}{N+2}$ to the true $\chi^2$-statistic. Thus for increasing $N$, the perturbed (private) $\chi^2$-statistics get more accurate.

Before we consider the sensitivity of the $p$-values, we derive the asymptotic distribution of the perturbed $\chi^2$-statistic which is a convolution of its (asymptotic) sampling distribution and perturbation. 

\begin{thm}
\label{thm:conv}
Let a $\chi^2$ test statistic $T$ have the $\chi^2$ sampling distribution with $2$ degrees of freedom and let the perturbation $Y\sim Laplace(0, 4/\epsilon)$. Then, the distribution of the perturbed $\chi^2$ test statistic, $X=T+Y$, has the following probability density function $$f(x)=\left\{ \begin{array}{ll} 
\frac{\epsilon}{4}\frac{1}{\epsilon+2}\exp\left(\frac{\epsilon x}{4}\right) & \textrm{if $x< 0$}\\ & \\
\frac{\epsilon}{4}\left[\left(\frac{1}{\epsilon-2}+\frac{1}{\epsilon+2}\right)\exp\left(-\frac{x}{2}\right)-\frac{1}{\epsilon-2}\exp\left(-\frac{\epsilon x}{4}\right)\right] & \textrm{if $x\geq0$}
\end{array}, \right.
$$
and the following cumulative distribution function
$$F(x)=\left\{ \begin{array}{ll} 
\frac{1}{\epsilon+2}\exp\left(\frac{\epsilon x}{4}\right) & \textrm{if $x< 0$}\\ & \\
1-\frac{\epsilon}{2}\left(\frac{1}{\epsilon-2}+\frac{1}{\epsilon+2}\right)\exp\left(-\frac{x}{2}\right)+\frac{1}{\epsilon-2}\exp\left(-\frac{\epsilon x}{4}\right) & \textrm{if $x\geq0$}
\end{array}. \right.
$$
\end{thm}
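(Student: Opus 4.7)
The plan is to compute the density of $X = T + Y$ as the convolution of the densities of $T$ and $Y$, and then obtain the cdf by integrating the result. Since the $\chi^2_2$ law coincides with the exponential distribution of rate $1/2$, the density of $T$ is $f_T(t) = \tfrac{1}{2} e^{-t/2}$ on $t \geq 0$, and the $\mathrm{Laplace}(0,4/\epsilon)$ density of $Y$ is $f_Y(y) = \tfrac{\epsilon}{8} e^{-\epsilon |y|/4}$. Independence gives
\[
f_X(x) \;=\; \frac{\epsilon}{16}\int_0^{\infty} e^{-t/2}\,e^{-\epsilon|x-t|/4}\,dt.
\]

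The absolute value causes trouble only where $x-t$ changes sign, so I would split into two cases. If $x<0$, then $x-t<0$ for every $t\geq 0$; the integrand collapses to a single exponential in $t$ with rate $\tfrac{\epsilon+2}{4}$, and a direct evaluation yields the claimed $\tfrac{\epsilon}{4(\epsilon+2)}\,e^{\epsilon x/4}$. If $x\geq 0$, I would split the $t$-integral at $t=x$: on $[0,x]$ the combined exponent is linear in $t$ with slope $\tfrac{\epsilon-2}{4}$, producing, after antidifferentiation, terms proportional to both $e^{-x/2}$ and $e^{-\epsilon x/4}$; on $[x,\infty)$ the integrand is a pure exponential in $t$ that contributes a further multiple of $e^{-x/2}$. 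Collecting the coefficients of $e^{-x/2}$ and of $e^{-\epsilon x/4}$ produces exactly the stated piecewise density.

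For the cdf, I would integrate $f_X$ directly. The $x<0$ branch is immediate: one exponential antiderivative gives $F(x)=\tfrac{1}{\epsilon+2}\,e^{\epsilon x/4}$. For $x\geq 0$ I would write $F(x) = F(0) + \int_0^x f_X(t)\,dt$ with $F(0)=\tfrac{1}{\epsilon+2}$ already forced by the left-hand branch, and integrate term by term; the partial-fraction identity $\tfrac{1}{\epsilon-2}+\tfrac{1}{\epsilon+2} = \tfrac{2\epsilon}{\epsilon^2-4}$ is the key simplification that collapses the constants to the compact form displayed in the theorem. As sanity checks I would verify continuity of $F$ at $x=0$ and $\lim_{x\to\infty} F(x) = 1$, both of which follow by direct substitution.

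The main obstacle is essentially bookkeeping: tracking the sign of $x-t$ in the convolution, carrying out the two-region split cleanly for $x\geq 0$, and recognizing the simplifications that reduce the various rational functions of $\epsilon$ to the stated form. One mildly delicate point is the formal singularity at $\epsilon=2$, where the antiderivative $\tfrac{4}{\epsilon-2}\,e^{t(\epsilon-2)/4}$ degenerates; however, the two terms carrying a $\tfrac{1}{\epsilon-2}$ factor combine to a finite limit (by L'Hôpital's rule), so the formulas extend continuously in $\epsilon$ and no separate case need be written out.
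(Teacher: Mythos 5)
Your proposal is correct and follows exactly the route the paper takes: the paper's proof simply observes that $X=T+Y$ has the convolution of the $\chi^2_2$ (i.e.\ exponential with rate $1/2$) and Laplace densities, and leaves the integral unevaluated. You carry out that convolution explicitly (splitting at the sign change of $x-t$ and then integrating for the cdf), and your computations check out, so your write-up is a fully detailed version of the same argument.
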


\begin{proof}
Since $T$ and $Y$ are independent random variables, the distribution of $X$ is the convolution of the given $\chi^2$ and $Laplace$ distributions. 
\end{proof}

We show through simulations in Section \ref{results} that the finite sample distribution is well-approximated by this asymptotic distribution even for tables with low total count, marginal counts or individual counts. This is in contrast to the poor finite sample behavior of the $\chi^2$ test statistics arising when the noise is added directly to the underlying cell counts (see Section \ref{results}); the latter mechanism has been considered by many (e.g., \cite{Dwork, Fienberg_2010}).  For related simulations that demonstrate the interactive effect of sample size and privacy level $\epsilon$ and compare asymptotic efficiency of private and non-private estimators for $2\times 2$ tables and the corresponding $\chi^2$-statistics, see \cite{vu2009differential}. 

We now prove that the asymptotic distribution of the perturbed $\chi^2$-statistic arising from perturbing the cell counts is the same as for the unperturbed $\chi^2$-statistic, namely a $\chi^2$-distribution with two degrees of freedom.
\begin{thm}
\label{thm:cellchi}
Let $X^{(n)}$ denote a 6-dimensional random variable corresponding to the entries of a $3\times 2$ contingency table based on $n$ individuals. Let $Y$ denote a 6-dimensional random variable drawn from $\textrm{Laplace}(0,\frac{2}{\epsilon})$. Then the perturbed $\chi^2$-statistic arising from perturbed cell counts $(X^{(n)}+Y)$ asymptotically has a $\chi^2$-distribution with two degrees of freedom.
\end{thm}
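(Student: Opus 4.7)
My plan is to show that the perturbed statistic $\tilde{\chi}^2(n)$ computed from $X^{(n)}+Y$ differs from the unperturbed statistic $\chi^2(n)$ by a quantity that converges to $0$ in probability, and then invoke Slutsky together with the classical Pearson result that $\chi^2(n)\overset{d}{\to}\chi^2_2$ under the null of independence (which is clearly the implicit setting here, since otherwise the unperturbed statistic is itself not asymptotically $\chi^2_2$).

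The first step is to rewrite the statistic in a scale-invariant form. Writing the Pearson statistic as $T(N)=n\cdot g(\hat{p})$ with $\hat{p}_{ij}=X^{(n)}_{ij}/n$ and $g(p)=\sum_{i,j}(p_{ij}-p_{i\cdot}p_{\cdot j})^2/(p_{i\cdot}p_{\cdot j})$, the perturbed statistic is $\tilde{T}=\tilde{n}\cdot g(\tilde{p})$ where $\tilde{n}=n+\sum_{i,j}Y_{ij}$ and $\tilde{p}_{ij}=(X^{(n)}_{ij}+Y_{ij})/\tilde{n}$. Because $Y$ has a distribution that does not depend on $n$, we have $Y=O_p(1)$, hence $\tilde{n}/n\to 1$ in probability and a direct calculation (multiply and divide by $1+Y_{\cdot\cdot}/n$) gives
\begin{equation*}
\tilde{p}-\hat{p}=O_p(1/n).
\end{equation*}

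Next I would quantify the change in $g$. Under $H_0$ with true joint distribution $p^*$ satisfying $p^*_{ij}=p^*_{i\cdot}p^*_{\cdot j}$, the function $g$ vanishes at $p^*$ together with its gradient, and $\hat{p}=p^*+O_p(1/\sqrt{n})$ by the CLT. Expanding the gradient $\nabla g$ around $p^*$ shows that $\nabla g(\hat{p})=O_p(1/\sqrt{n})$. A first-order Taylor expansion of $g$ around $\hat{p}$ then yields
\begin{equation*}
g(\tilde{p})-g(\hat{p})=\nabla g(\hat{p})^{T}(\tilde{p}-\hat{p})+O_p(\|\tilde{p}-\hat{p}\|^{2})=O_p(1/n^{3/2}).
\end{equation*}
Combining these pieces,
\begin{equation*}
\tilde{T}-T=\tilde{n}\bigl[g(\tilde{p})-g(\hat{p})\bigr]+(\tilde{n}-n)\,g(\hat{p})=O_p(n)\cdot O_p(n^{-3/2})+O_p(1)\cdot O_p(1/n)=O_p(1/\sqrt{n}),
\end{equation*}
so $\tilde{T}-T\overset{P}{\to}0$. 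Slutsky's theorem and the classical convergence $T\overset{d}{\to}\chi^2_2$ then deliver the claim.

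The step I expect to require the most care is establishing that $\nabla g(\hat{p})$ is itself $O_p(1/\sqrt{n})$: this is what saves us, because a naive bound that just treats $g$ as an arbitrary smooth function would give only $\tilde{T}-T=O_p(1)$, which is not enough. The mileage comes from the fact that $p^*$ is simultaneously a zero of $g$ and a critical point of $g$ under independence, so that small deviations of $\hat{p}$ from $p^*$ produce only second-order changes in $g$. Once that observation is made precise, the rest is routine Slutsky bookkeeping combined with the boundedness of the positive marginals assumed in Section 2, which keeps the denominators $\tilde{p}_{i\cdot}\tilde{p}_{\cdot j}$ bounded away from $0$ with probability tending to $1$.
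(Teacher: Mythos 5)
Your argument is correct, but it is a genuinely different route from the paper's. The paper works at the level of the normalized cell proportions: since $Y/\sqrt{n}\to 0$, Slutsky's theorem gives $\sqrt{n}\bigl((X^{(n)}+Y)/n-\hat{p}\bigr)\xrightarrow{d}\mathcal{N}(0,\Sigma)$ with the same Gaussian limit as the unperturbed proportions, and the continuous mapping theorem then sends the quadratic form $n\bigl((X^{(n)}+Y)/n-\hat{p}\bigr)^T\Gamma^{-1}\bigl((X^{(n)}+Y)/n-\hat{p}\bigr)$ to $\chi^2_2$. You instead prove asymptotic equivalence of the two statistics, $\tilde{T}-T=O_p(n^{-1/2})\xrightarrow{P}0$, and apply Slutsky at the level of the statistics themselves. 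The essential extra ingredient in your version --- and you correctly identify it as the crux --- is that the independence point $p^*$ is simultaneously a zero and a critical point of $g$, so that $\nabla g(\hat{p})=O_p(1/\sqrt{n})$ and the first-order term $\tilde{n}\,\nabla g(\hat{p})^T(\tilde{p}-\hat{p})$ is $O_p(n^{-1/2})$ rather than the useless $O_p(1)$ a generic smoothness bound would give. What each approach buys: the paper's argument is shorter, but the quadratic form it analyzes is built from the true probability vector $\hat{p}$ and the original $n$, so it does not literally track the Pearson statistic a practitioner would compute from the released noisy table; your decomposition $\tilde{T}=\tilde{n}\,g(\tilde{p})$ handles exactly that plug-in statistic, with estimated margins and a perturbed grand total, and additionally yields an explicit $O_p(n^{-1/2})$ rate for the discrepancy. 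The price is the Taylor-expansion bookkeeping, all steps of which check out (including the $O_p(1/n)$ bound on $\tilde{p}-\hat{p}$, the $O_p(1/n)$ bound on $g(\hat{p})=T/n$, and the remark that the positive-margin assumption keeps the denominators bounded away from zero with probability tending to one).
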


\begin{proof}
Let $p_0, p_1, p_2, q_0, q_1\in [0,1]$ such that $p_0+p_1+p_2=1$ and $q_0+q_1=1$. Under the null hypothesis of independence on a $3\times 2$ contingency table the data is sampled from a multinomial distribution with probability vector $\hat{p}=(p_0q_0, p_0q_1, p_1q_0, p_1q_1, p_2q_0, p_2q_1)^T$. The central limit theorem implies that
$$\sqrt{n}\left(\frac{X^{(n)}}{n}-\hat{p}\right) \xrightarrow{d}\mathcal{N}\left(0, \Sigma\right),$$
where $\Sigma$ is the covariance matrix of the product multinomial, i.e. 
$$\Sigma=\Gamma-\hat{p}\hat{p}^T$$
and $\Gamma=\textrm{diag}(\hat{p})$. Note that $\Sigma$ has rank 2 and therefore also $\Gamma^{-\frac{1}{2}}\Sigma\Gamma^{-\frac{1}{2}}$. Let $Y\sim\textrm{Laplace}(0,\frac{2}{\epsilon})$. Slutsky's theorem
 implies that
 $$\sqrt{n}\left(\frac{X^{(n)}+Y}{n}-\hat{p}\right) \xrightarrow{d}\mathcal{N}\left(0, \Sigma\right),$$
 and therefore that
  $$\sqrt{n}\,\,\Gamma^{-\frac{1}{2}}\left(\frac{X^{(n)}+Y}{n}-\hat{p}\right) \xrightarrow{d}\mathcal{N}\left(0, \Gamma^{-\frac{1}{2}}\Sigma\Gamma^{-\frac{1}{2}}\right).$$
Finally, by invoking the continuous mapping theorem, we  prove the claim, namely
$$\chi^2_{\textrm{perturbed}}=n\left(\frac{X^{(n)}+Y}{n}-\hat{p}\right)^T\Gamma^{-1}\left(\frac{X^{(n)}+Y}{n}-\hat{p}\right)\xrightarrow{d}\chi^2_2.$$
\end{proof}


Given the above derived distributions, the researcher can now compute the $p$-values for the test of independence using the perturbed $\chi^2$-statistics (when perturbing the test statistic itself or when adding noise at the level of the cell counts). 

We also consider releasing differentially private $p$-values (without perturbing the counts or the related statistic first). We perform a similar sensitivity analysis on the $p$-values corresponding to the $\chi^2$-statistics when assuming a $\chi^2$-distribution with $2$ degrees of freedom as null distribution, cf. \cite{BFH}. 

\begin{thm}
\label{thm_p}
The sensitivity of the $p$-values of the $\chi^2$-statistic for a $3\times 2$ contingency table with positive margins and $N/2$ cases and  $N/2$ controls is $\exp(-2/3)$, when the null distribution is a $\chi^2$-distribution with $2$ degrees of freedom.
\end{thm}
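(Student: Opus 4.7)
The plan is to reduce the sensitivity of the $p$-value to a closed-form function of the $\chi^2$-statistic and then perform an extremal analysis in the spirit of Theorem~\ref{thm_chi}. Since the $\chi^2_2$ distribution has survival function $\bar F(t)=\exp(-t/2)$, the $p$-value associated with an observed statistic $t$ is the explicit expression $p(t)=\exp(-t/2)$. Using the same WLOG reduction as in Theorem~\ref{thm_chi}, namely that we may alter a single individual's genotype in the cases column, the sensitivity becomes
$$\sup_{(a,b,m,n)\in\mathcal{D}\cap\{a\geq 1\}}\bigl|\,e^{-\chi^2(a,b,m,n)/2}-e^{-\chi^2(a-1,b+1,m-1,n+1)/2}\,\bigr|.$$
Writing $t_1\leq t_2$ for the two $\chi^2$-values and $\Delta=t_2-t_1$, this objective factors as $e^{-t_1/2}\bigl(1-e^{-\Delta/2}\bigr)$, which exposes a trade-off between small $t_1$ and large $\Delta$.

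Next I would set up the joint optimization. Unlike in Theorem~\ref{thm_chi}, the exponential factor $e^{-t_1/2}$ strongly suppresses pairs whose $\chi^2$-values are both large, so the Theorem~\ref{thm_chi} extremum (where $t_1,t_2$ are both of order $N$) is not the maximizer here. A directional-derivative computation of $p\circ\chi^2$ in the direction $v=(-1/2,1/2,-1/2,1/2)$ gives, via the chain rule, $D_v(p\circ\chi^2)=-\tfrac12 e^{-\chi^2/2}D_v\chi^2$, with $D_v\chi^2$ as in the proof of Theorem~\ref{thm_chi}; this confirms that the maximizer should involve small values of $a,b,m,n$ rather than the saturated boundary of Theorem~\ref{thm_chi}. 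Reparametrising $x=2a-m$ and $y=2b-n$ cancels the row-$2$ contribution from $\Delta$ and reduces the task to optimising $e^{-t_1/2}\bigl(1-e^{-\Delta/2}\bigr)$ over the small-integer boundary of the feasible set.

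Finally I would enumerate the admissible small-integer pairs compatible with $a\geq 1$, positive margins, and the neighbouring-table constraints (so that both $(a,b,m,n)$ and $(a-1,b+1,m-1,n+1)$ lie in $\mathcal{D}$), identify the maximiser, and evaluate $|p(t_1)-p(t_2)|$ at that point to obtain $\exp(-2/3)$.

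The main obstacle is the coupling between $t_1$ and $\Delta$: because both arise from the same neighbour-pair, one cannot independently drive $t_1\to 0$ and $\Delta\to\infty$, so the optimisation is not a clean boundary extremum as in Theorem~\ref{thm_chi} but a joint problem over a non-convex feasible set of integer-valued pairs of $3\times 2$ tables. The hardest part will be the careful case enumeration on the small-integer boundary, where the competition between $e^{-t_1/2}$ and $1-e^{-\Delta/2}$ is sharpest and several candidate pairs give values very close to one another.
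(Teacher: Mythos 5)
Your setup matches the paper's starting point exactly: $p(t)=\exp(-t/2)$ under the $\chi^2_2$ null, the same neighbouring-table move $(a,b,m,n)\mapsto(a-1,b+1,m-1,n+1)$, and the objective $e^{-t_1/2}\bigl(1-e^{-\Delta/2}\bigr)$. But the proposal stops precisely where the proof has to happen: you correctly identify the coupling between $t_1$ and $\Delta$ as the crux, and then defer its resolution to an unspecified ``enumeration of small-integer pairs'' whose outcome you assert rather than derive. The paper resolves the coupling with two concrete steps that are missing from your plan. First, since $\bigl|\tfrac{d}{dx}e^{-x/2}\bigr|=\tfrac12 e^{-x/2}$ is maximized at $x=0$, the extremal starting table is taken to have $\chi^2=0$, i.e., two identical columns $\left[\begin{smallmatrix} a & a \\ b & b \\ N/2-a-b & N/2-a-b\end{smallmatrix}\right]$, so $t_1=0$ and the objective collapses to maximizing the neighbour's $\chi^2$-statistic. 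Second, that neighbour's statistic has the closed form $\frac{1}{2a-1}+\frac{1}{2b+1}$, maximized over integers $a,b\geq 1$ at $a=b=1$ with value $4/3$, which yields the stated $\exp(-2/3)$. Without the first step (or a genuinely completed enumeration replacing it), you have not ruled out a pair with small positive $t_1$ and a much larger gap $\Delta$ (recall $\Delta$ can approach $4N/(N+2)\approx 4$ by Theorem~\ref{thm_chi}), which is exactly the scenario your own ``main obstacle'' paragraph worries about. So this is a correct framing and a reasonable plan, but not a proof.

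One further caution if you do carry out your computation: at the extremal pair the two $p$-values are $1$ and $\exp(-2/3)$, so the quantity your objective actually evaluates to is $1-\exp(-2/3)\approx 0.487$, not $\exp(-2/3)\approx 0.513$; the paper reports the latter as the sensitivity (a safe over-estimate, since $\exp(-2/3)>1-\exp(-2/3)$, but not what the displayed difference equals). You should be explicit about which of these your argument produces rather than targeting $\exp(-2/3)$ as a foregone conclusion.
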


\begin{proof}
Under the null $\chi^2$-distribution with $2$ degrees of freedom, the $p$-value corresponding to a value $x$ of the $\chi^2$-statistic is
$$\exp(-\frac{x}{2}), \qquad x\geq 0.$$
The first derivative in absolute value is maximized by \mbox{$x=0$.} Therefore, the sensitivity of the $p$-value is given by a change of 1 unit  in a contingency table with $\chi^2=0$, i.e., in a contingency table of the form
$$\begin{bmatrix} a & a \\ b & b \\ N/2-a-b & N/2-a-b \end{bmatrix},$$
where $a,b>0$, and $a+b<N/2$. We therefore need to find $a,b$ which maximize
\begin{equation*}\begin{split}\Bigg|\textrm{$p$-value}\left(\begin{bmatrix} a & a \\ b & b \\ N/2-a-b & N/2-a-b \end{bmatrix}\right) - \qquad\quad\quad\quad\\ \qquad\quad\quad\quad \textrm{$p$-value}\left(\begin{bmatrix} a-1 & a \\ b+1 & b \\ N/2-a-b & N/2-a-b \end{bmatrix}\right) \Bigg|,\end{split}\end{equation*}
where $a,b>0$, and $a+b<N/2$. Equivalently, we need to maximize 
$$\chi^2\left(\begin{bmatrix} a-1 & a \\ b+1 & b \\ N/2-a-b & N/2-a-b \end{bmatrix}\right)$$
over $a,b>0$, and $a+b<N/2$. The corresponding $\chi^2$-statistic is given by
$$\frac{1}{2a-1}+\frac{1}{2b+1},$$
which is maximized by $a=b=1$ and results in a $\chi^2$-statistic of 4/3. Consequently, the sensitivity of the $p$-value is   $\exp(-2/3)$.
\end{proof}

The $\epsilon$-differentially private mechanism for a single SNP would then release a private $p$-value equal to the original value plus Laplace noise with mean zero and scale $\frac{1}{\epsilon}\exp(-2/3)$.

The sensitivity of the $\chi^2$-statistic corresponds to the most `dependent' contingency table while the sensitivity of the $p$-value is determined by an `independent' contingency table. By the most `dependent' (resp.~`independent') contingency table we mean a table which achieves the maximal (resp.~minimal) $\chi^2$-statistic over all contingency tables with $N$ individuals. The maximal $\chi^2$-statistic is $N$, while the minimal $\chi^2$-statistic is $0$. 

Since in practice we are not interested in contingency tables with very large $p$-values, we in effect have overestimated the sensitivity of the $p$-value, and wish instead to determine the sensitivity of the $p$-value within the range of ``interesting" contingency tables.  We therefore analyze what happens if we project all $p$-values, which are larger than a given value $p^*$, onto $p^*$. Since the $\chi^2$-statistic for a table with fixed marginal frequencies grows in proportion to $N$, we analyze the situation where $p^*$ decreases with increasing $N$, i.e.,~$p^*=\exp(-N/c)$, where $c$ is some constant to be specified by the user. Such a $p$-value corresponds to a table with $\chi^2$-statistic $2N/c$ and can be viewed as a contingency table which is at least $N/c$ steps of Hamming distance 1 away from independence.

\begin{cor}
\label{cor}
Projecting all $p$-values which are larger than $p^*=\exp(-N/c)$ onto $p^*$ results in a sensitivity of 
$$\exp\left(-\frac{N}{c}\right) - \exp\left(-\frac{N(2Nc-4N-4c+c^2)}{2c(Nc-2N-c)}\right)$$ for any fixed constant $c\geq 3$, which is a factor of $N/2$.
\end{cor}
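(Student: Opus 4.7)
The plan is to parallel Theorem~\ref{thm_p} but now with a projection built into the $p$-value. First I would note that the projected $p$-value can be written as
$$\tilde p(T) \;=\; \min\bigl(\exp(-\chi^2(T)/2),\, p^*\bigr),$$
which, by monotonicity of $\exp(-x/2)$, equals $p^*=\exp(-N/c)$ exactly when $\chi^2(T)\le 2N/c$ and equals $\exp(-\chi^2(T)/2)$ otherwise. The sensitivity is therefore the maximum of $|\tilde p(T)-\tilde p(T')|$ over one-individual neighbors $T,T'$.

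Next I would split into cases based on where $\chi^2(T)$ and $\chi^2(T')$ lie relative to the threshold $2N/c$. When both lie at or below the threshold, the difference is zero. When both lie strictly above the threshold, the difference is $|\exp(-\chi^2(T)/2)-\exp(-\chi^2(T')/2)|$, which, since $\exp(-x/2)$ is monotone decreasing, is dominated by the remaining (mixed) case in which exactly one of the two tables crosses the threshold. In every non-trivial case one obtains
$$|\tilde p(T)-\tilde p(T')|\;\le\;\exp(-N/c)-\exp(-\chi^2_{\max}/2),$$
where $\chi^2_{\max}$ denotes the largest $\chi^2$-statistic attained by any $3\times 2$ contingency table $T'$ (with positive margins and $N/2$ cases and $N/2$ controls) that has a one-individual neighbor $T$ satisfying $\chi^2(T)\le 2N/c$. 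The problem thus reduces to computing $\chi^2_{\max}$ explicitly.

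The main obstacle is precisely this constrained optimization, and the hard part will be identifying the correct extremizer. The approach is to re-use the parametrization $(a,b,m,n)\in\mathcal{D}$ from Theorem~\ref{thm_chi} together with the closed form
$$\chi^2(a,b,m,n)\;=\;\frac{(2a-m)^2}{m}+\frac{(2b-n)^2}{n}+\frac{(2a-m+2b-n)^2}{N-m-n}$$
and the directional derivative $2a^2/m^2-4a/m-2b^2/n^2+4b/n$ along the one-step direction $(-1/2,1/2,-1/2,1/2)$. In Theorem~\ref{thm_chi} the maximizer was pushed to the extreme corner of $\mathcal{D}$ giving the globally largest $\chi^2$-jump; here the additional constraint $\chi^2(T)\le 2N/c$ is binding at the optimum and forces the maximizer onto the hypersurface $\chi^2(T)=2N/c$ in the interior of $\mathcal{D}$. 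Solving the corresponding Lagrangian, or equivalently parametrizing all four-tuples with $\chi^2(T)=2N/c$ and then maximizing $\chi^2(T')$ over the one-step neighbors of such $T$, yields the extremal pair. Substituting the resulting value of $\chi^2_{\max}$ into $\exp(-N/c)-\exp(-\chi^2_{\max}/2)$ produces the formula stated in the corollary.

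Finally, the clause ``which is a factor of $N/2$'' I would verify by an asymptotic expansion: expand the closed form for $\chi^2_{\max}$ for large $N$ and fixed $c\ge 3$ and compare $\exp(-N/c)-\exp(-\chi^2_{\max}/2)$ with the unprojected sensitivity $\exp(-2/3)$ of Theorem~\ref{thm_p}. The most delicate step remains the constrained optimization, because once $\chi^2(T)$ is pinned to the boundary value $2N/c$ the simple corner-pushing argument of Theorem~\ref{thm_chi} no longer applies, and one must argue directly that among all one-step neighbors of boundary tables the specific extremizer identified by the Lagrangian gives the global maximum.
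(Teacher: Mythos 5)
Your setup is sound and matches the paper's in spirit: the sensitivity of the projected $p$-value reduces to finding a pair of neighboring tables straddling the threshold $\chi^2 = 2N/c$ that maximizes the drop in $p$-value, with the binding constraint forcing one table onto the level set $\chi^2(T)=2N/c$. But the proposal stops exactly where the proof has to begin: you never identify the extremal pair or compute $\chi^2_{\max}$, deferring everything to ``solving the corresponding Lagrangian.'' The entire content of the corollary is that explicit computation. The paper exhibits the extremizer concretely: the threshold table is
$$\begin{bmatrix} 0 & N/c \\ N/c & 0 \\ N(c-2)/(2c) & N(c-2)/(2c) \end{bmatrix},$$
which attains $\chi^2=2N/c$ and is claimed to maximize the $\chi^2$-statistic among tables that are $N/c$ Hamming-distance-1 steps from independence; the worst one-step neighbor is obtained by moving one individual from cell $(3,2)$ to cell $(1,2)$, giving
$$\chi^2_{\max}=\frac{N(2Nc-4N-4c+c^2)}{c(Nc-2N-c)},$$
whence the stated sensitivity $\exp(-N/c)-\exp(-\chi^2_{\max}/2)$. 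Without producing this pair (or an equivalent one) and verifying the resulting value, your argument does not establish the formula.

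A secondary gap: your claim that the case where both neighbors lie strictly above the threshold is ``dominated by the mixed case'' does not follow from monotonicity alone. Replacing $\exp(-\chi^2(T)/2)$ by the larger value $p^*$ gives the bound $p^*-\exp(-\chi^2(T')/2)$, but the $T'$ appearing there need not have any neighbor at or below the threshold, so it is not covered by your definition of $\chi^2_{\max}$; and the crude bound obtained from the one-step sensitivity $4N/(N+2)\approx 4$ of the $\chi^2$-statistic (Theorem~\ref{thm_chi}) is $p^*(1-e^{-2})$, which exceeds the claimed answer of roughly $p^*(1-e^{-1/2})$. So this case needs a genuine argument --- for instance, that the maximal one-step increase of $\chi^2$ shrinks as the starting table moves toward independence --- rather than an appeal to monotonicity. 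The paper's own proof is admittedly only an overview, but it does supply the decisive extremal tables and the resulting value that your proposal leaves out.
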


\begin{proof}
The proof is similar to the proofs of Theorem \ref{thm_chi} and Theorem \ref{thm_p}. We here give an overview. The contingency table
$$\begin{bmatrix} 0 & \frac{N}{c} \\ \frac{N}{c} & 0 \\ \frac{N(c-2)}{2c} & \frac{N(c-2)}{2c} \end{bmatrix}$$
has a $\chi^2$-statistic $\frac{2N}{c}$ and hence a $p$-value of $\exp(-N/c)$. This table has the maximal $\chi^2$-statistic over all tables which are $N/c$ steps of Hamming distance 1 away from independence, i.e.,~this table is $N/c$ steps away from the following table
$$\begin{bmatrix} \frac{N}{2c} & \frac{N}{2c} \\ \frac{N}{2c} & \frac{N}{2c} \\ \frac{N(c-2)}{2c} & \frac{N(c-2)}{2c} \end{bmatrix}.$$
The largest change in $\chi^2$-statistic is achieved by moving one individual from cell $(3,2)$ to cell $(1,2)$ resulting in the table
$$\begin{bmatrix} 0 & \frac{N+c}{c} \\ \frac{N}{c} & 0 \\ \frac{N(c-2)}{2c} & \frac{N(c-2)-2c}{2c} \end{bmatrix}.$$
This new contingency table has $\chi^2$-statistic 
$$\frac{N(2Nc-4N-4c+c^2)}{c(Nc-2N-c)}.$$
\end{proof}

For large $N$, 
$$\frac{N(2Nc-4N-4c+c^2)}{c(Nc-2N-c)}\approx \frac{2N}{c},$$
and  the corresponding $p$-value is of the order \mbox{of $p^*$.}

In GWAS settings, however, researchers typically  provide only the $\chi^2$-statistics or the corresponding $p$-values of {\em the $M$ most significant SNPs}. Since the ranking reveals additional information, it is not sufficient to add the above computed noise to these statistics in order to achieve differential privacy. Bhaskar et al.~\cite{Bhaskar} show in the context of frequent pattern recognition how to release the most significant patterns together with their frequencies while satisfying differential privacy. 
We adapt their method by incorporating our results from Theorem \ref{thm_chi} and Theorem \ref{thm_p} to GWAS, and state the main result of this section: Algorithm 1 for releasing the private $\chi^2$-statistics (p-values) of the $M$ most relevant SNPs. 

Let $M'$ denote the total number of SNPs in a GWAS and $M$ the number of statistics one would like to release. Naively, one might expect that it is necessary to add Laplace noise with scale $\frac{M'}{\epsilon}\frac{4N}{N+2}$ for the $\chi^2$-statistics and $\frac{M'}{\epsilon}\exp(-2/3)$ for the $p$-values. As we see in Algorithm 1, however, the Laplace noise only scales with the number of actually released statistics $M$.

\begin{algorithm}[!h]
\caption{$\epsilon$-Differentially Private Algorithm for Releasing the $M$ Most Relevant SNPs}
\begin{algorithmic}
\begin{STATE}

{\bf Input:}
The $\chi^2$-statistics (resp.~$p$-values) for all $M'$ SNPs and the number of statistics, $M$, we want to release. 

{\bf Output:} 
The $M$ noisy $\chi^2$-statistics (resp.~$p$-values).

\begin{enumerate}
\item[1.] Add Laplace noise with mean zero and scale $\frac{4M}{\epsilon}\frac{4N}{N+2}$ to the $\chi^2$-statistics (resp.~Laplace noise with mean zero and scale $\frac{4M}{\epsilon}\exp(-2/3)$ to the $p$-values).
\item[2.] Pick the top $M$ SNPs with respect to the perturbed $\chi^2$-statistics (resp.~$p$-values). We denote the corresponding set of SNPs by $\mathcal{S}$.
\item[3.] Add new Laplace noise with mean zero and scale $\frac{2M}{\epsilon}\frac{4N}{N+2}$ to the true $\chi^2$-statistics of the SNPs in $\mathcal{S}$ (resp.~Laplace noise with mean zero and scale $\frac{2M}{\epsilon}\exp(-2/3)$ to the true $p$-values) and release these perturbed statistics.
\end{enumerate}

\end{STATE}
\end{algorithmic}
\label{alg_chi_p}
\end{algorithm}

\begin{thm}
Algorithm \ref{alg_chi_p} is $\epsilon$-differentially private. 
\end{thm}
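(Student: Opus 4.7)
The plan is to view Algorithm~\ref{alg_chi_p} as the composition of two sub-mechanisms, each consuming $\epsilon/2$ of the privacy budget, and then invoke the basic composition theorem for differential privacy. Let $\mathcal{M}_1$ denote Steps~1--2: perturb the $M'$ scores with Laplace noise and output only the set $\mathcal{S}$ of SNPs whose noisy scores lie in the top $M$ (the perturbed scores themselves from Step~1 are discarded). Let $\mathcal{M}_2$ denote Step~3: given the set $\mathcal{S}$ produced by $\mathcal{M}_1$, add \emph{fresh} independent Laplace noise to the true scores of the SNPs in $\mathcal{S}$ and release these perturbed values. I will show that $\mathcal{M}_1$ is $\epsilon/2$-differentially private and that $\mathcal{M}_2$ is $\epsilon/2$-differentially private; basic composition then yields the theorem.

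The analysis of $\mathcal{M}_2$ is a routine application of the Laplace mechanism. By Theorem~\ref{thm_chi}, each $\chi^2$-statistic has sensitivity $S=\tfrac{4N}{N+2}$, so for any fixed set $\mathcal{S}$ of size $M$ the $M$-vector of true $\chi^2$-statistics on $\mathcal{S}$ has $L_1$-sensitivity at most $MS$. Adding independent Laplace noise of scale $\tfrac{MS}{\epsilon/2}=\tfrac{2MS}{\epsilon}$ per coordinate therefore yields $\epsilon/2$-differential privacy, which matches exactly the scale $\tfrac{2M}{\epsilon}\tfrac{4N}{N+2}$ prescribed in Step~3. The $p$-value case is identical, using $S=\exp(-2/3)$ from Theorem~\ref{thm_p} and noise scale $\tfrac{2M}{\epsilon}\exp(-2/3)$.

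The main obstacle, and the reason the algorithm is nontrivial, is the analysis of $\mathcal{M}_1$. A naive Laplace-mechanism argument on the full $M'$-vector of perturbed scores would require noise of scale proportional to $M'$, not $M$; the whole point is to exploit the fact that only the top-$M$ \emph{identities} (not the perturbed scores) are released. The key tool is the top-$k$ selection analysis of Bhaskar et al.~\cite{Bhaskar}: if each of $M'$ scores has sensitivity at most $S$, then perturbing the scores with i.i.d.\ Laplace noise of scale $\tfrac{4MS}{\epsilon}$ and releasing only the identities of the top-$M$ perturbed scores is $\epsilon/2$-differentially private. Applying this with $S=\tfrac{4N}{N+2}$ from Theorem~\ref{thm_chi} (respectively $S=\exp(-2/3)$ from Theorem~\ref{thm_p}) gives exactly the noise scale prescribed in Step~1. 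I plan to invoke their theorem directly rather than re-derive it; the crux of their argument is that a single-individual change shifts each score by at most $S$, so for any candidate set $\mathcal{S}$ one can couple the Laplace noise draws across two neighboring datasets and bound the log-likelihood ratio of the event ``the selected top-$M$ set equals $\mathcal{S}$'' by $\epsilon/2$. Combining the $\epsilon/2$ guarantee for $\mathcal{M}_1$ with the $\epsilon/2$ guarantee for $\mathcal{M}_2$ via basic composition completes the proof.
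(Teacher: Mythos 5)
Your proposal is correct and follows essentially the same route as the paper: the paper's proof simply plugs the sensitivities from Theorem~\ref{thm_chi} and Theorem~\ref{thm_p} into Theorem~5 of Bhaskar et al.~\cite{Bhaskar}, which is exactly the $\epsilon/2$ top-$M$ selection plus $\epsilon/2$ Laplace release decomposition you spell out. Your version just makes the composition argument and the verification of the noise scales explicit, while still (appropriately) deferring the nontrivial top-$M$ selection analysis to \cite{Bhaskar}.
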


\begin{proof}
Using the sensitivities computed in Theorem \ref{thm_chi} and Theorem \ref{thm_p}, the proof follows immediately from Theorem 5 in \cite{Bhaskar}.
\end{proof}

\section{Evaluation of Methodology and Results}
\label{results}

We now evaluate the performance of the proposed methods based on data from a simulation study and using a GWAS data set consisting of 685 dogs and their hair length. The GWAS data for the hair length of dogs has first been presented and studied in \cite{CNQ+09} and further been analyzed in \cite{our_paper_GWAS}. It consists of $685$ dogs, $319$ dogs with long hair as cases and $364$ with short hair as controls, and contains $40,842$ SNPs. Cadieu et al.~\cite{CNQ+09} have shown that the long versus short hair phenotype is associated with a mutation in the \emph{fibroblast growth factor-5} (\emph{FGF5} gene) and the largest $\chi^2$-statistic is achieved by a SNP located on chromosome 32 at position $7,100,913$, i.e.,~about $300$Kb apart from \emph{FGF5}.

We also use the simulations from \cite{our_paper_GWAS} performed using HAP-SAMPLE \cite{HAP-SAMPLE}.  HAP-SAMPLE generates the cases and controls by resampling from HapMap. The simulated data show linkage disequilibrium and allele frequencies similar to real data. The simulated association studies consist of 400 cases and 400 controls with about 10,000 SNPs per individual (SNPs typed with the Affy CHIP on chromosome 9 and chromosome 13 of the Phase I/II HapMap data). Two SNPs were chosen to be causative and the simulations were performed for three different MAFs (0.1, 0.25 and 0.4) and two different models of interaction (additive effect and multiplicative effect of the two SNPs). See \cite{our_paper_GWAS} for more details.

For this paper, we omit the simulation results on the statistical utility of $\epsilon$-differentially private release of aggregate MAFs.  Our results are similar to those reported in the current literature on Laplace mechanism for noise addition to histograms or smaller contingency tables with proportions (e.g., \cite{Dwork}, \cite{vu2009differential}). Instead, we focus on the release of differentially-private $\chi^2$-statistics, $p$-values and the most relevant SNPs. 

\subsection{Asymptotic distribution of the perturbed $\chi^2$-statistic}

We first present results on the asymptotic distribution of the perturbed $\chi^2$-statistic arising from adding noise directly to the statistic, as derived in Theorem~\ref{thm:conv}, and evaluate the accuracy of the asymptotic approximation. The distribution for $\epsilon=0.2$ is described in Figure \ref{fig:convolution},  and a comparison of three distributions, namely the asymptotic $\chi^2$-distribution, the asymptotic Laplace distribution and their convolution for different values of the privacy parameter $\epsilon$ are shown in Figure  \ref{fig:comparison_distributions}; we can observe that the asymptotic distribution of the perturbed $\chi^2$-statistic is very similar to the underlying Laplace distribution as expected based on the convolution derived in Theorem~\ref{thm:conv}. 

\begin{figure}[!t]
\centering
\includegraphics[scale=0.52]{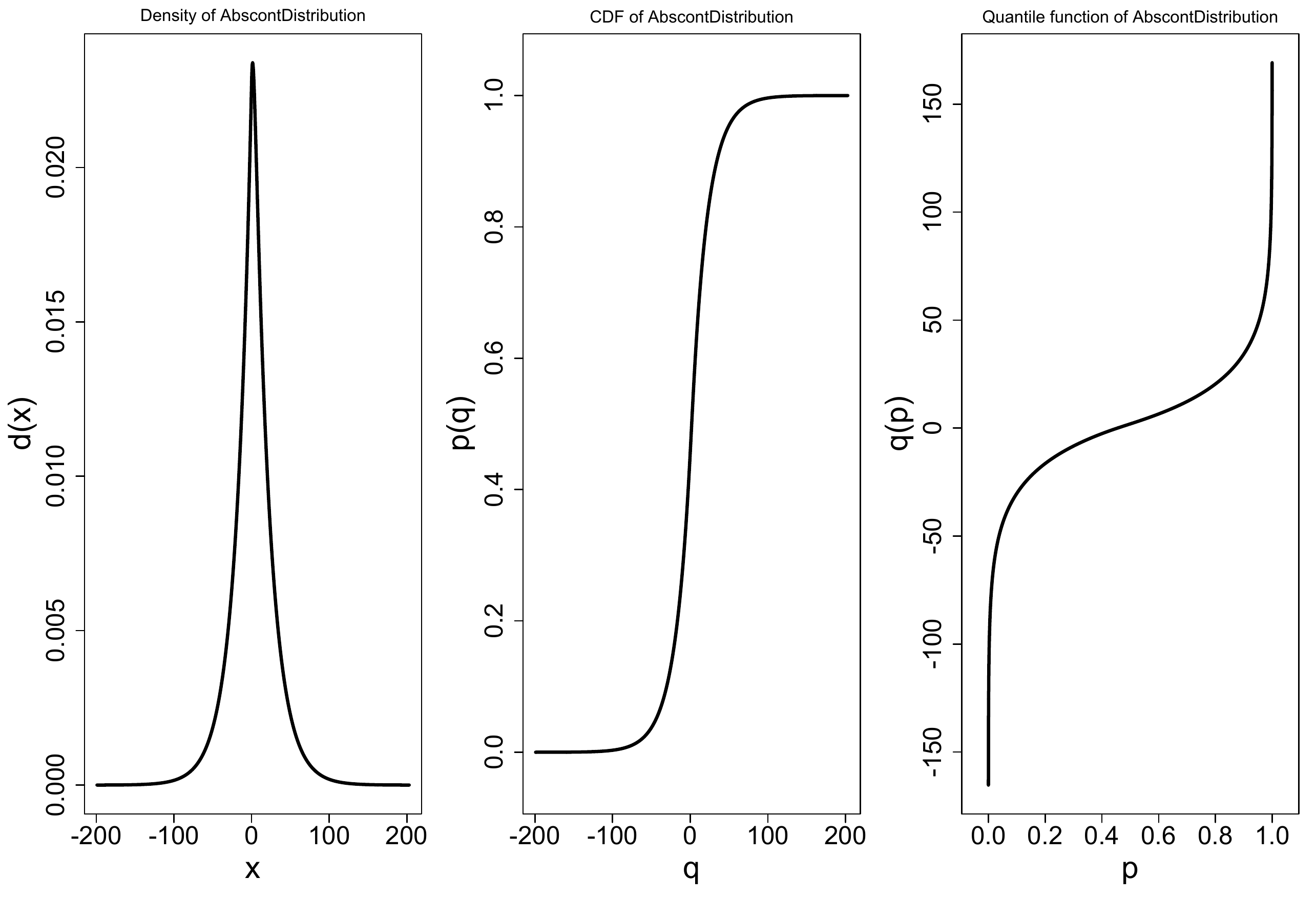} 
\caption{Asymptotic distribution of the perturbed $\chi^2$ test statistic for $\epsilon=0.2$: density function (left), cumulative distribution function (middle), and quantile function (right).}
\label{fig:convolution}
\end{figure}

\vspace{0.4cm}

\begin{figure}[!b]
\centering
\includegraphics[scale=0.52]{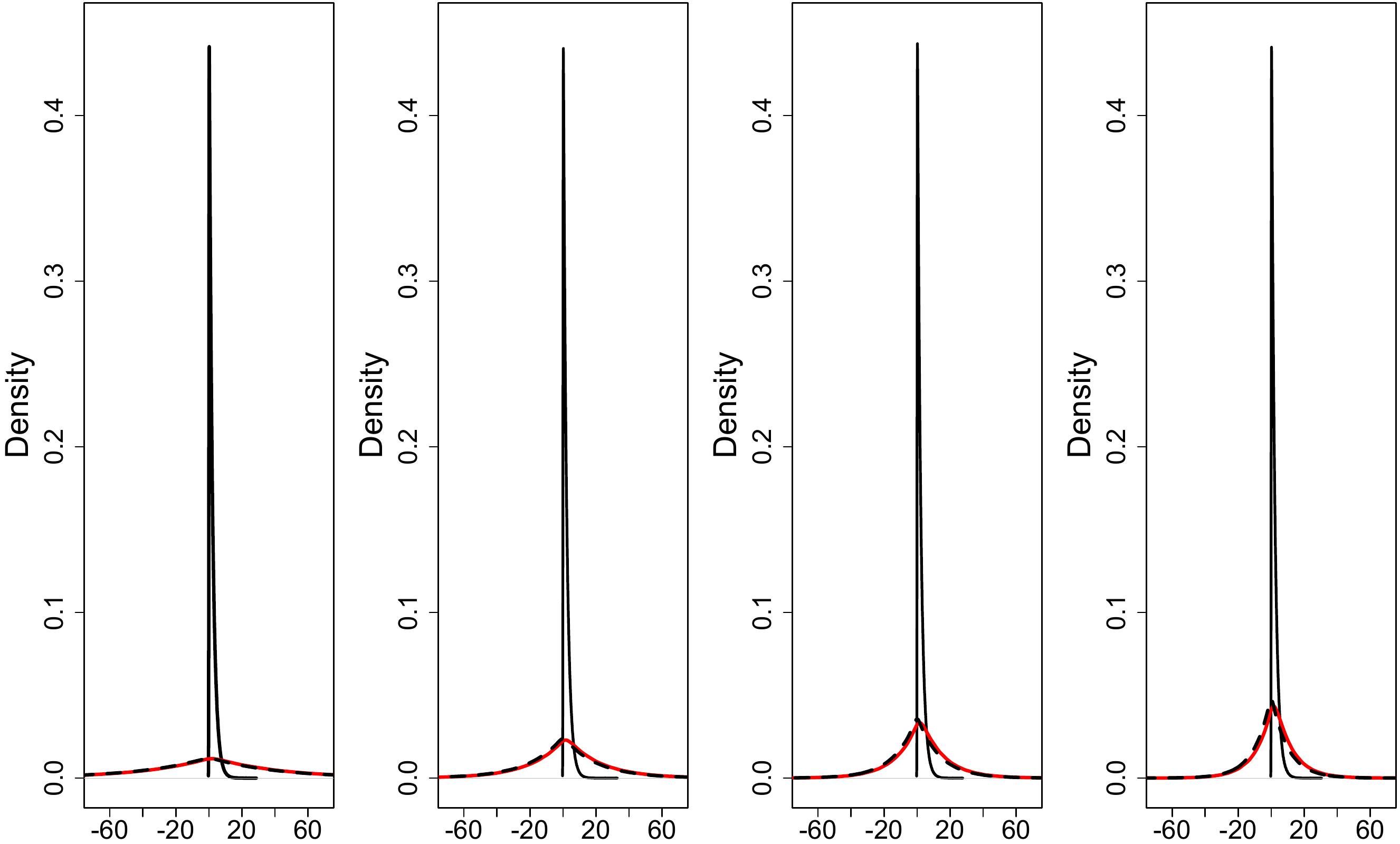} 
\caption{Comparison of the asymptotic sampling distribution (black line), perturbation (black dotted line) and its convolution (red line) for $\epsilon=0.1$ (left), $\epsilon=0.2$ (middle left), $\epsilon=0.3$ (middle right), and $\epsilon=0.4$ (right).}
\label{fig:comparison_distributions}
\end{figure}

Through simulations we analyzed at which point the asymptotic approximation seems to be accurate for finite samples. It turns out that even for tables with very small cell counts or marginal counts, the finite sample distribution of the private $\chi^2$-statistic is well-approximated by its asymptotic distribution, although it is well known that the exact distribution of the original $\chi^2$-statistic is very poorly approximated by the $\chi^2$-distribution for small samples. As an example we discuss the following table:
$$\begin{bmatrix} 1 & 3 \\ 8 & 12 \\ 41 & 35 \end{bmatrix}.$$
We ran a Markov chain on the set of contingency tables which have the same margins as the above table using tools from Algebraic Statistics, namely elements of a Markov basis as moves (e.g., see \cite{Diaconis}). At each step (table), we computed the corresponding $\chi^2$-statistic and added Laplace noise with scale $\frac{4}{\epsilon}$. The resulting posterior distribution is an approximation to the true distribution of the perturbed $\chi^2$-statistic and corresponds to the black dotted line in Figure \ref{fig:pert_chisq1}. The asymptotic distribution of the perturbed $\chi^2$-statistic derived in Theorem ~\ref{thm:conv} is shown in red. These plots  and additional simulations show that the asymptotic approximation is accurate even for tables with a low total count, marginal counts or individual cell counts.

\begin{figure}[!b]
\centering
\includegraphics[scale=0.55]{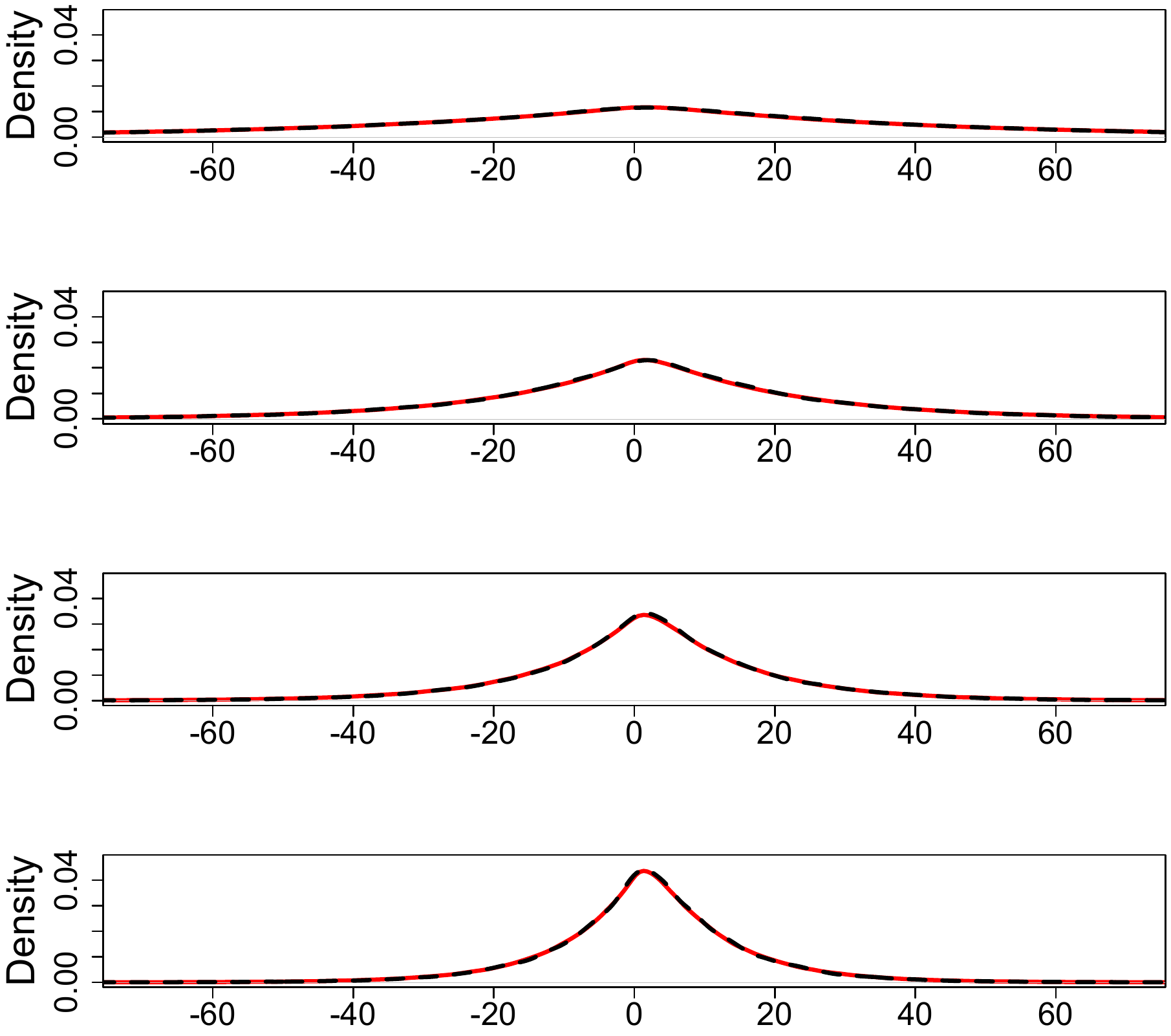} 
\caption{Asymptotic distribution of the perturbed $\chi^2$-statistic (red line) and its true distribution (black dotted line).}
\label{fig:pert_chisq1}
\end{figure}

Similarly, we now analyze under which conditions the asymptotic distribution of the perturbed $\chi^2$-statistic arising from perturbing the cell counts, as shown in Theorem~\ref{thm:cellchi}, appears to be accurate for finite samples. As we will see, when adding noise to the cell counts instead of the $\chi^2$-statistic, the asymptotic distribution of the computed statistic is only accurate for a very large total cell count. We analyze the following tables, one with a total cell count of 10,000 and two with a total cell count of 100,000:
$$(1)\;\;\begin{bmatrix} 1400 & 1600 \\ 1900 & 1300 \\ 1700 & 2100 \end{bmatrix}, \qquad (2)\;\;\begin{bmatrix} 14000 & 16000 \\ 19000 & 13000 \\ 17000 & 21000 \end{bmatrix}, \qquad (3)\;\;\begin{bmatrix} 1 & 3 \\ 26000 & 21000 \\ 23999 & 28997 \end{bmatrix}.$$

We again ran a Markov chain on the set of contingency tables which have the same margins as the above tables using a Markov basis to move between tables. At each step we perturbed the counts by adding Laplace noise with scale $\frac{2}{\epsilon}$ and computed the corresponding perturbed $\chi^2$-statistic. The resulting posterior distribution is an approximation to the true distribution of the perturbed $\chi^2$-statistic and is shown in Figure \ref{fig:pert_cells} for four values of the privacy parameter $\epsilon$. Also the true distribution of the unperturbed $\chi^2$-statistic and the $\chi^2$-distribution are shown for comparison. Note that a total cell count of $10,000$ is not sufficient for a good approximation of the finite sample distribution by the asymptotic distribution. For a total cell count of $100,000$ the approximation appears to be accurate as long as the individual cell counts and margins are not too small, as in the case of the third table.

\begin{figure}[!b]
\centering
\subfigure[Table (1)]{\includegraphics[scale=0.28]{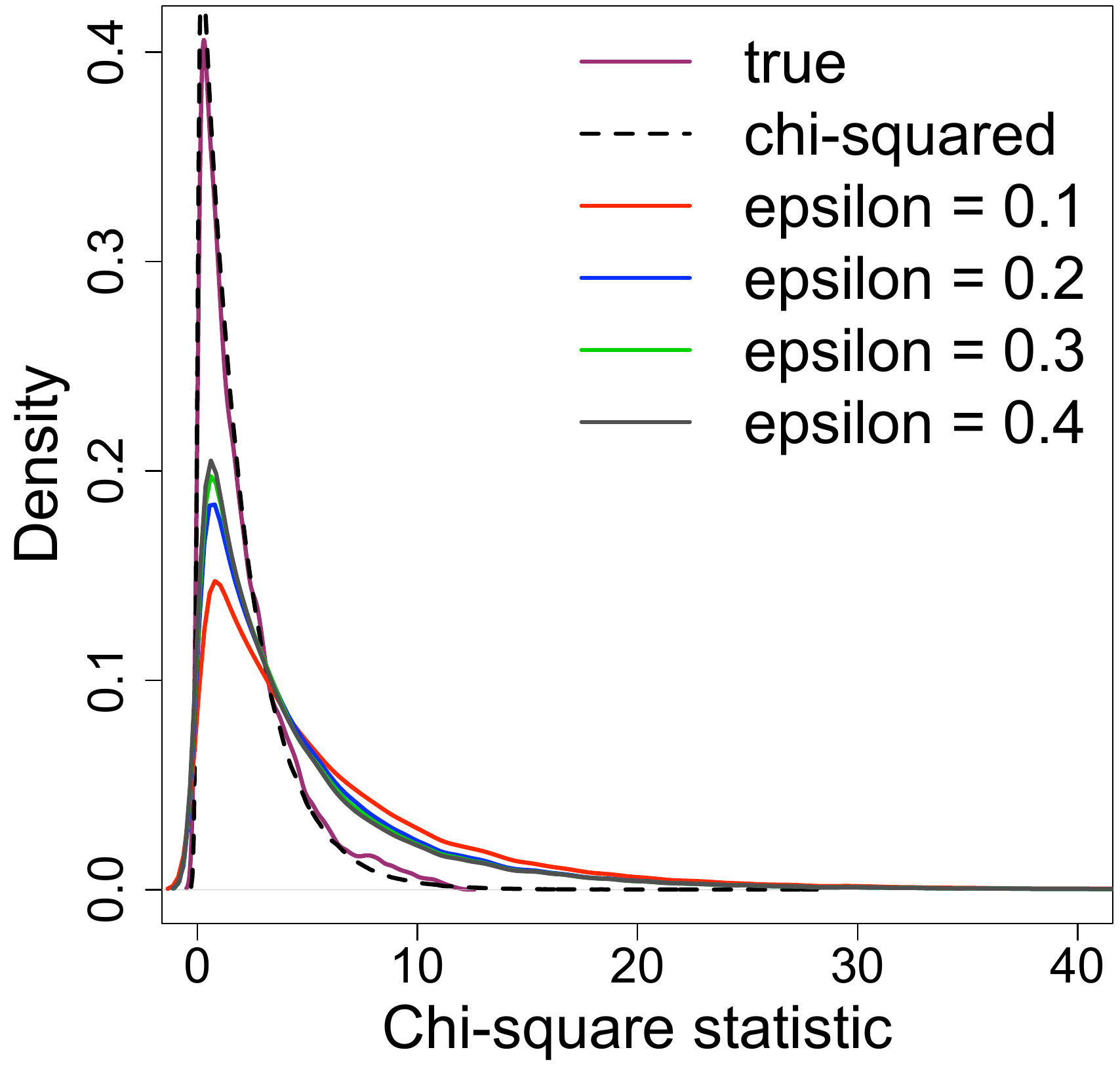}\label{fig:pert_cells:1}} \;
\subfigure[Table (2)]{\includegraphics[scale=0.28]{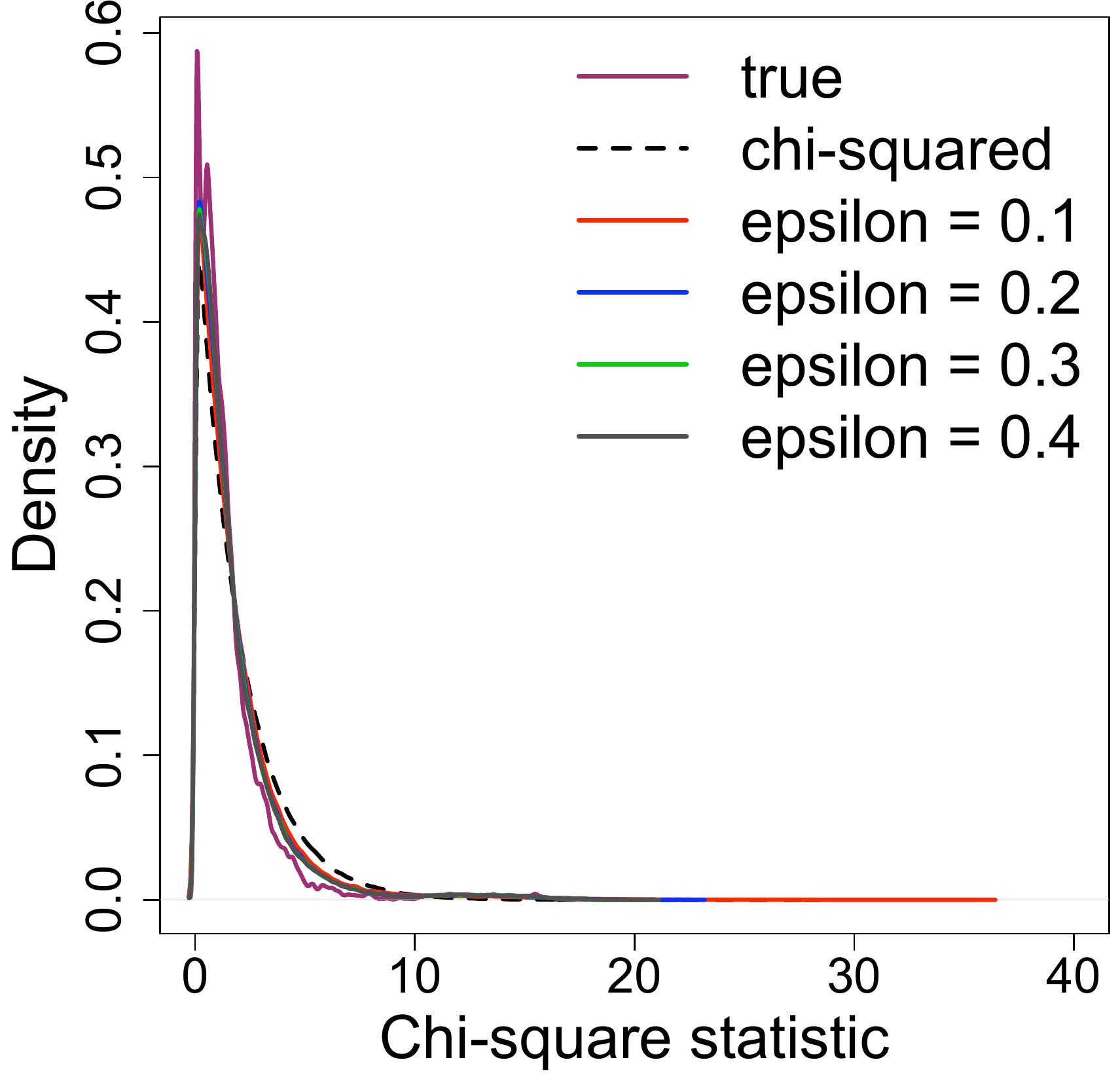}\label{fig:pert_cells:2}} \;
\subfigure[Table (3)]{\includegraphics[scale=0.28]{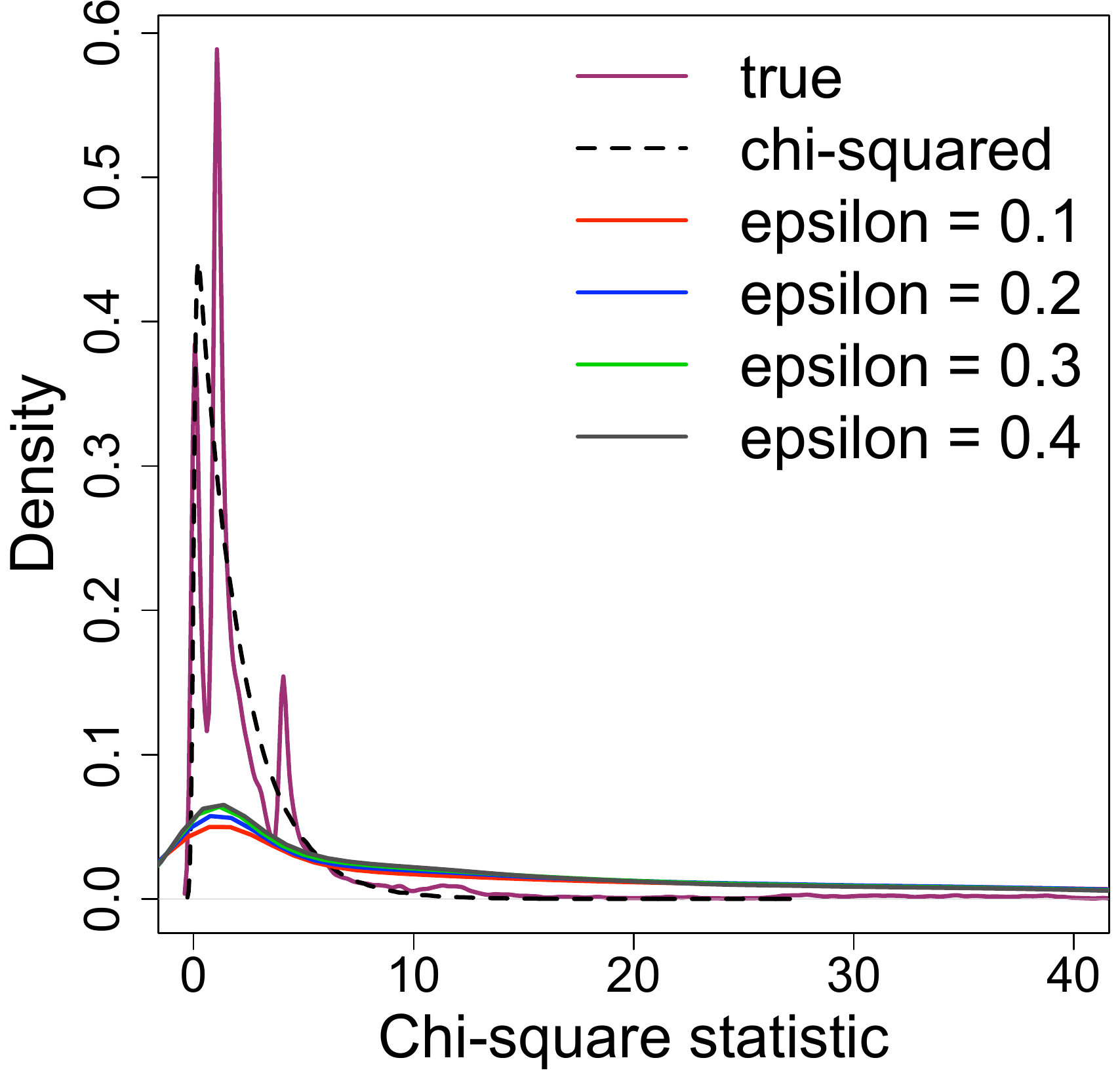}\label{fig:pert_cells:3}}
\caption{Exact and asymptotic distribution of the perturbed and unperturbed $\chi^2$-statistic.}
\label{fig:pert_cells}
\end{figure}


\subsection{Differentially-Private $\chi^2$-Statistics}
Based on the results from the previous section, releasing differentially private $\chi^2$-statistics versus perturbing cell counts and then computing the perturbed statistics, seems to work better on finite (and smaller) samples. Next, we focus on evaluating the statistical utility of the proposed release mechanism following Theorem~\ref{thm_chi}.  We compare the $\epsilon$-differentially private $\chi^2$-statistic to the original statistic via KL divergence. We generated $3\times 2$ contingency tables with positive margins and $N/2$ cases and  $N/2$ controls  assuming a product-multinomial distribution with the following frequencies:

\begin{equation}
\label{eq_frequencies}
\begin{split}
(a) \begin{bmatrix} 0.72 & 0.20 \\ 0.18 & 0.28 \\ 0.10 &0.52 \end{bmatrix}, \qquad (b) \begin{bmatrix} 0.60 & 0.23 \\ 0.21 & 0.30 \\ 0.19 & 0.47 \end{bmatrix},\\ (c) \begin{bmatrix} 0.47 & 0.25 \\ 0.45 & 0.51 \\ 0.08 &0.24 \end{bmatrix}, \qquad (d) \begin{bmatrix} 0.65 & 0.46 \\ 0.29 & 0.43 \\ 0.06 & 0.11 \end{bmatrix}.\end{split}
\end{equation}
For the $\chi^2$-distribution with 2 degrees of freedom, an observed  value of 6 corresponds to a  $p$-value of $\exp(-3)\approx 0.05$. The preceding frequency tables correspond to contingency tables for which we expect a $p$-value of 0.05 for
$$(a)\, N=20, \quad (b)\, N=40, \quad (c)\, N=80, \quad (d)\, N=160.$$
For example, for $N=200$ individuals and underlying frequency table (a) we expect a  table of the form
$$\begin{bmatrix} 72 & 20 \\ 18 & 28 \\ 10 &52 \end{bmatrix},$$
which has a $\chi^2$-statistic of 60. Therefore, for $N=20$ we expect a $\chi^2$-statistic of 6. If we fix the number of individuals $N$, then the $\chi^2$-statistic corresponding to frequency table (a) is the largest, namely 8 times the $\chi^2$-statistic corresponding to frequency table (d).

The choice of the frequency tables in (\ref{eq_frequencies}) is motivated by the GWAS on the hair length of dogs in \cite{CNQ+09} and our simulations using HAP-SAMPLE. The $\chi^2$-statistic resulting from the frequency table (a) is comparable to the $\chi^2$-statistic of the SNP most associated to the hair length in dogs (on chromosome 32 at position $7,100,913$ in the CanMap data set). The $\chi^2$-statistic resulting from the frequency table (c) is comparable to the $\chi^2$-statistic of a causative SNP in a simulated association study under the additive model  (i.e.,~main effects only model) for $MAF = 0.4$, and (d) is comparable to a causative SNP under the additive model for $MAF = 0.25$. The frequency table (b) corresponds to an intermediate model for a causative SNP with high MAF and was added for consistency.

For a fixed total number of individuals $N$, we generated 10,000 tables from the frequency tables in (\ref{eq_frequencies}) and computed the corresponding $\chi^2$-statistics. We also generated 10,000 private $\chi^2$-statistics according to the Laplace mechanism described following Theorem~\ref{thm_chi}. In Figure \ref{fig_chi} we plotted the KL divergence between the original and the private $\chi^2$-statistics for increasing $N$ and for four different levels of privacy. The four plots correspond to the four frequency tables in (\ref{eq_frequencies}). We see that the KL divergence depends on the $\chi^2$-statistic of the underlying frequency table, the total number of individuals $N$, and the privacy level $\epsilon$.  Since the added noise is asymptotically $Laplace(0,4)$ distributed, the larger the original $\chi^2$-statistic, the smaller the KL divergence is. Similarly, a larger number of individuals $N$ leads to a larger $\chi^2$-statistic and hence to a smaller KL divergence. The scale of the Laplace noise is inverse proportional to the privacy parameter $\epsilon$. Therefore, the smaller $\epsilon$, the larger the KL divergence is. These simulations demonstrate that it is possible to release $\epsilon$-differentially private $\chi^2$-statistics and maintain good statistical utility in a realistic GWAS setting.

\begin{figure}[!t]
\centering
\includegraphics[scale=0.23]{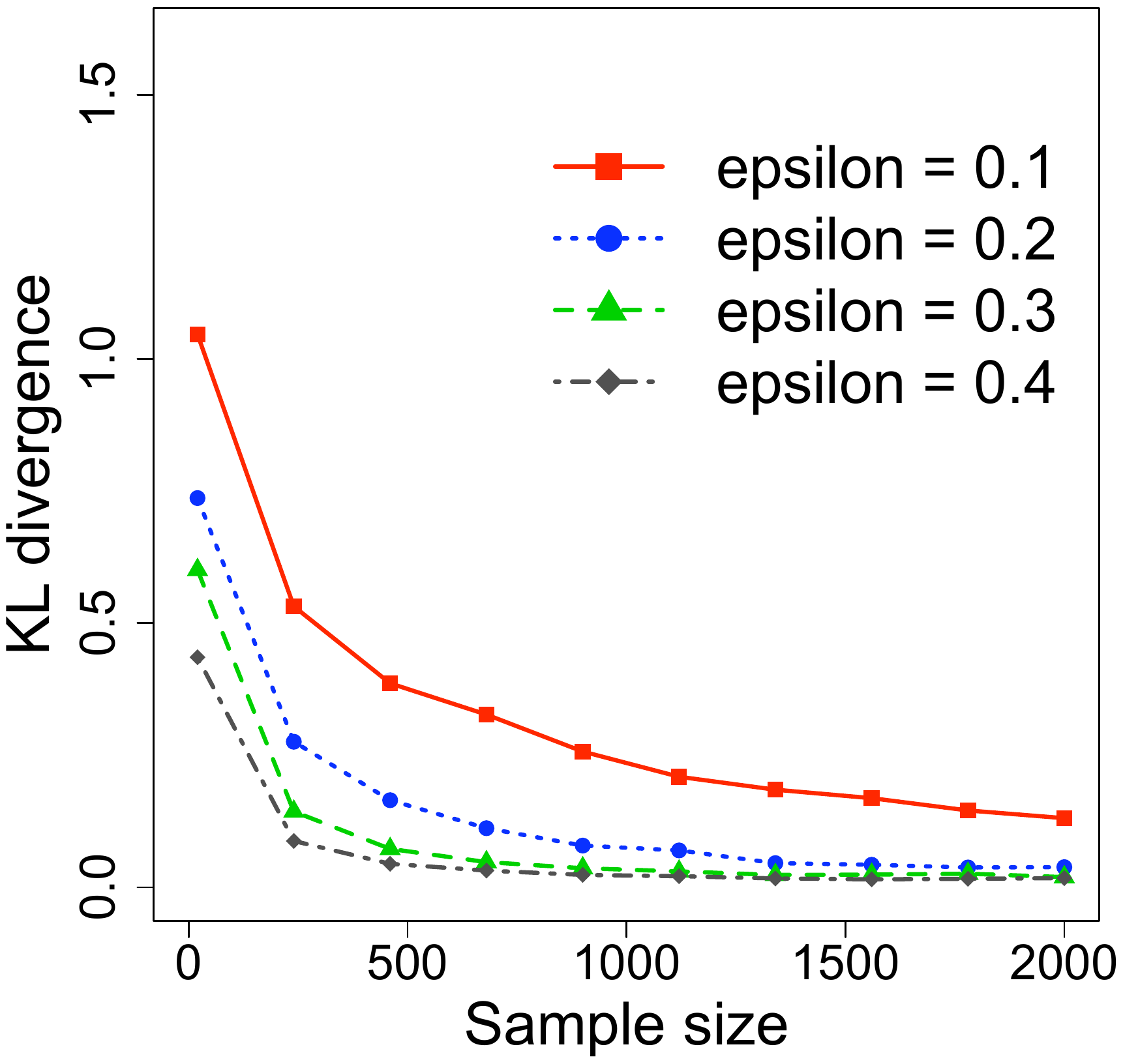}
\includegraphics[scale=0.23]{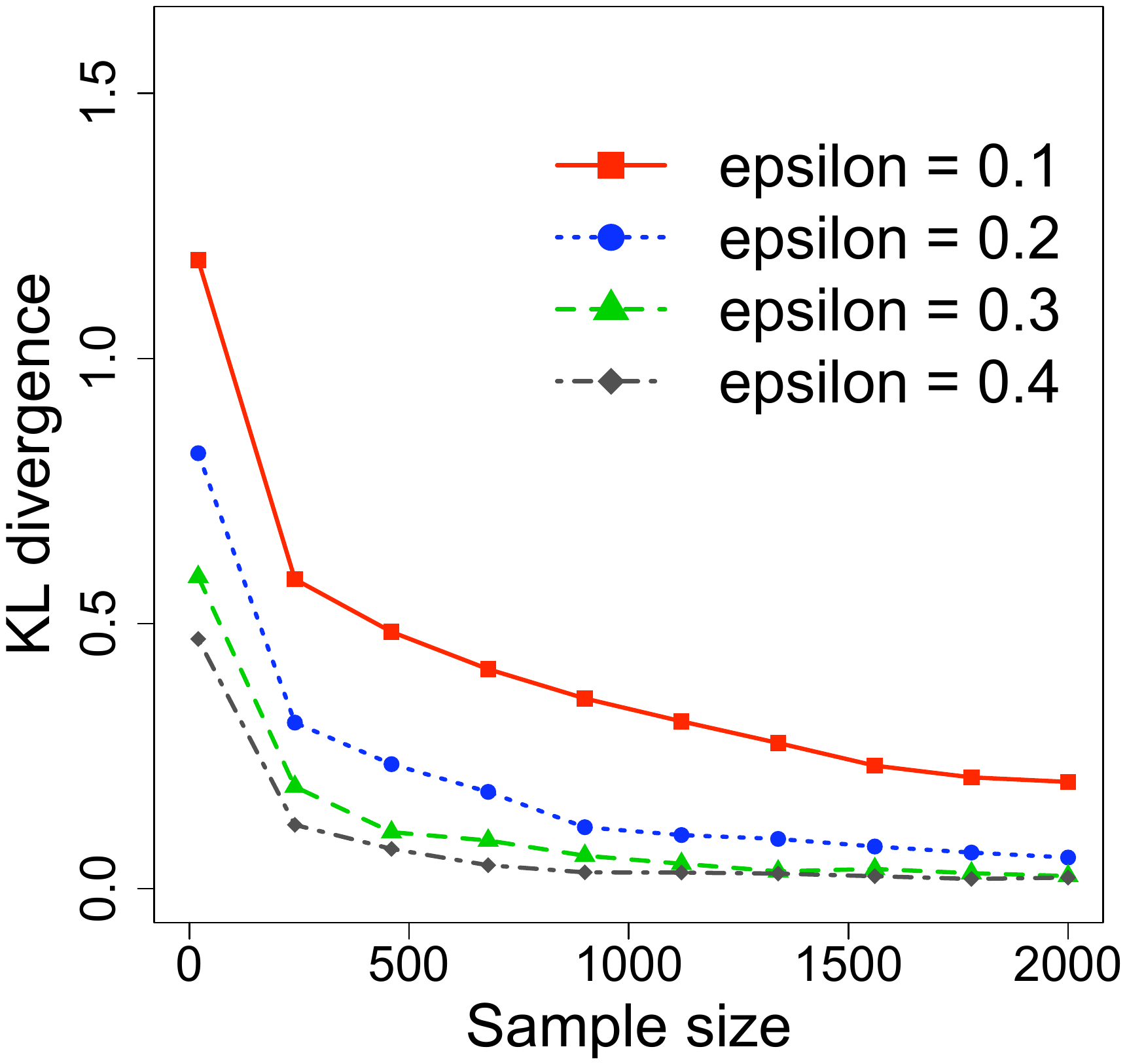}
\includegraphics[scale=0.23]{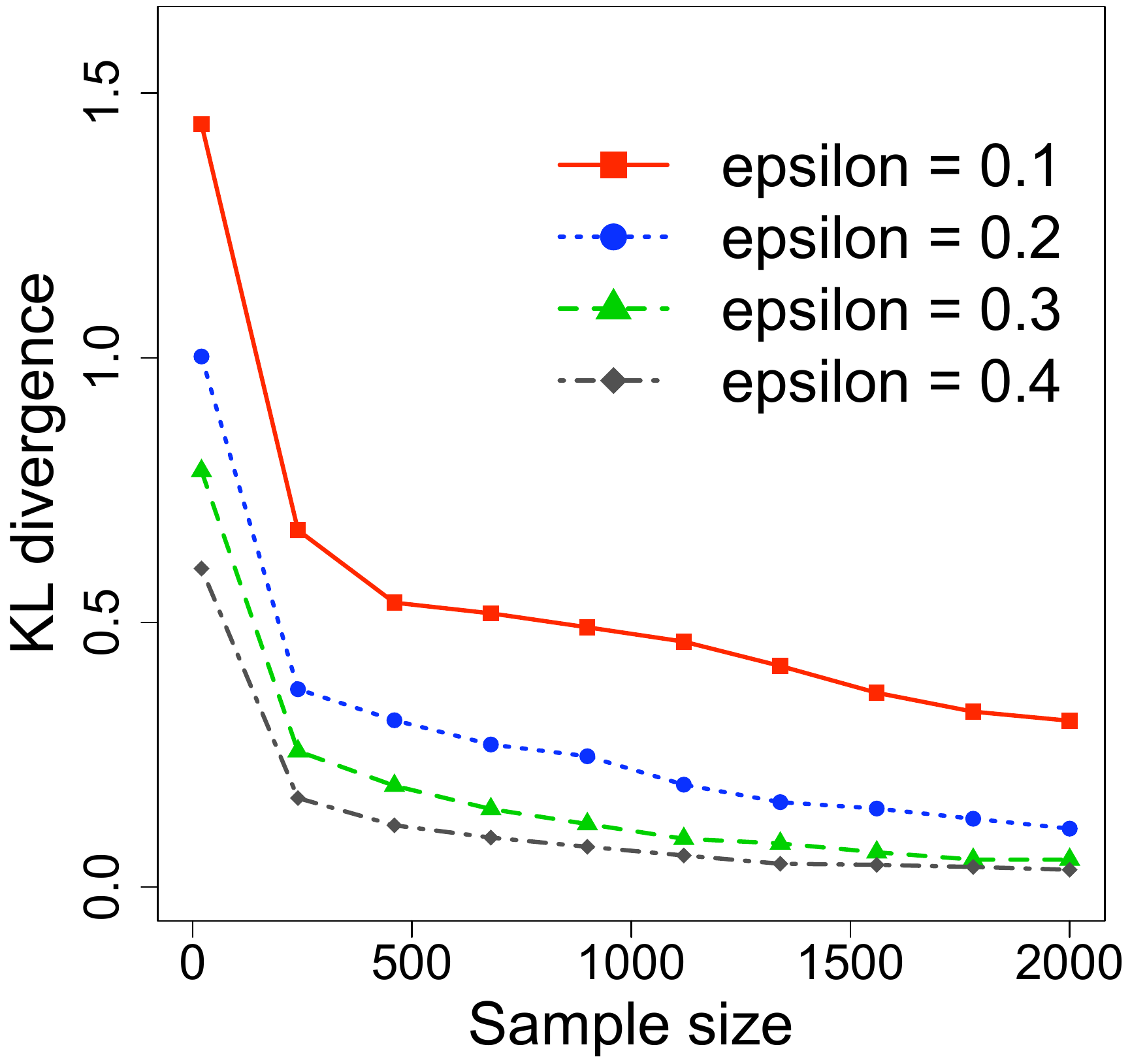}
\includegraphics[scale=0.23]{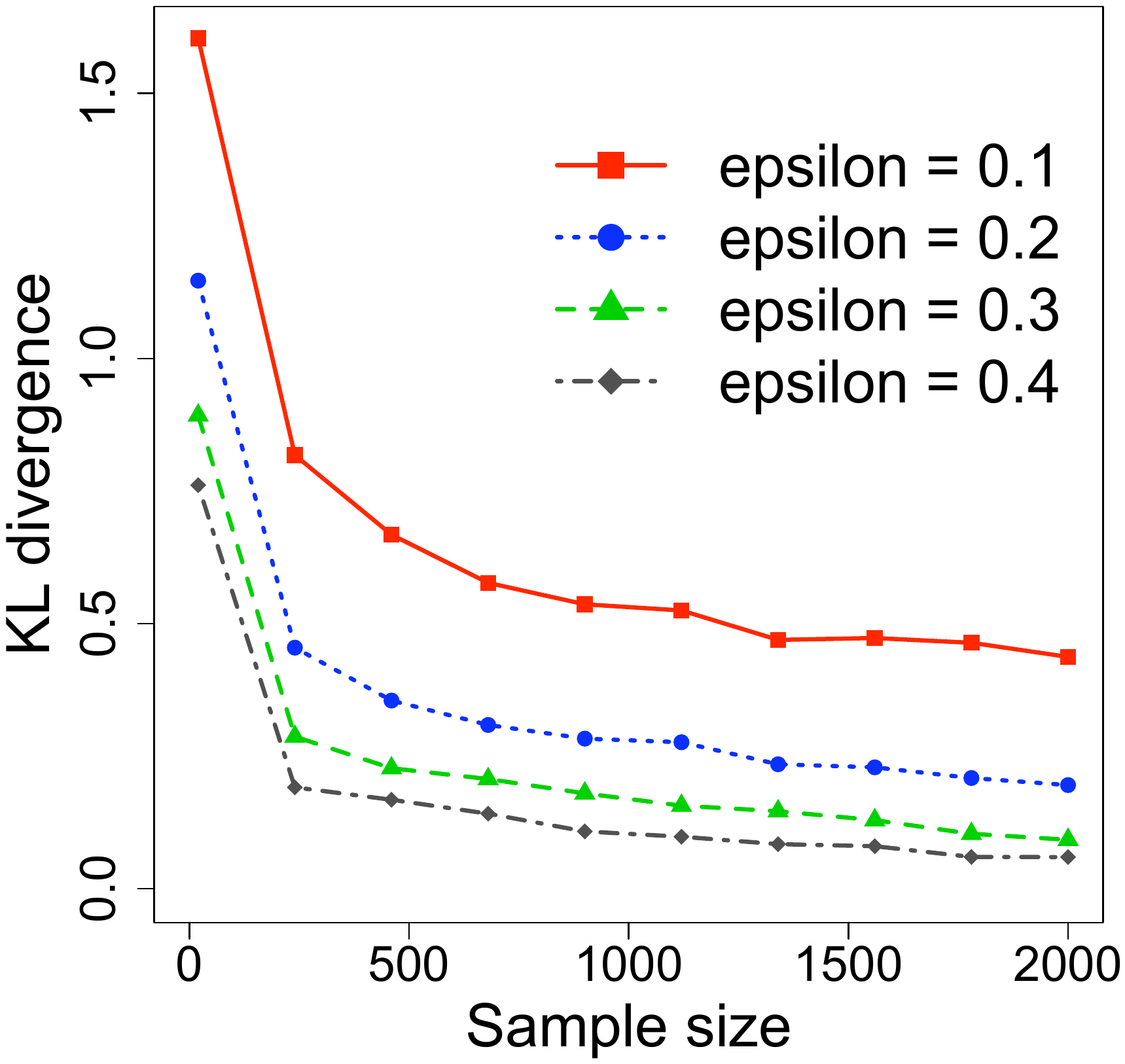}
\caption{KL divergence between the original $\chi^2$-statistic and the private $\chi^2$-statistic based on the frequency table (a) left, (b) middle left, (c) middle right, and (d) right.}
\label{fig_chi}
\end{figure}


\begin{figure}[!b]
\centering
\includegraphics[scale=0.23]{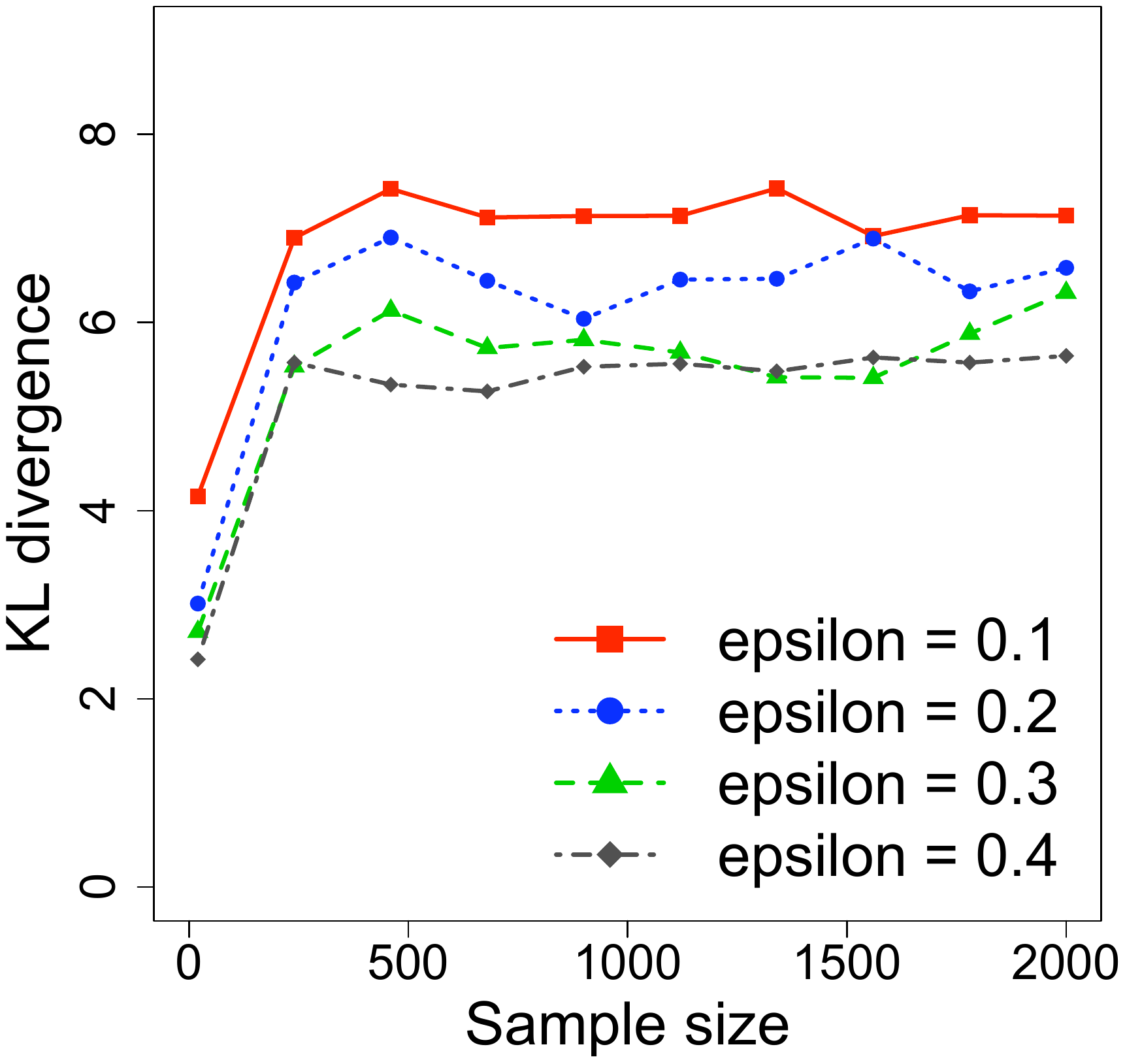}
\includegraphics[scale=0.23]{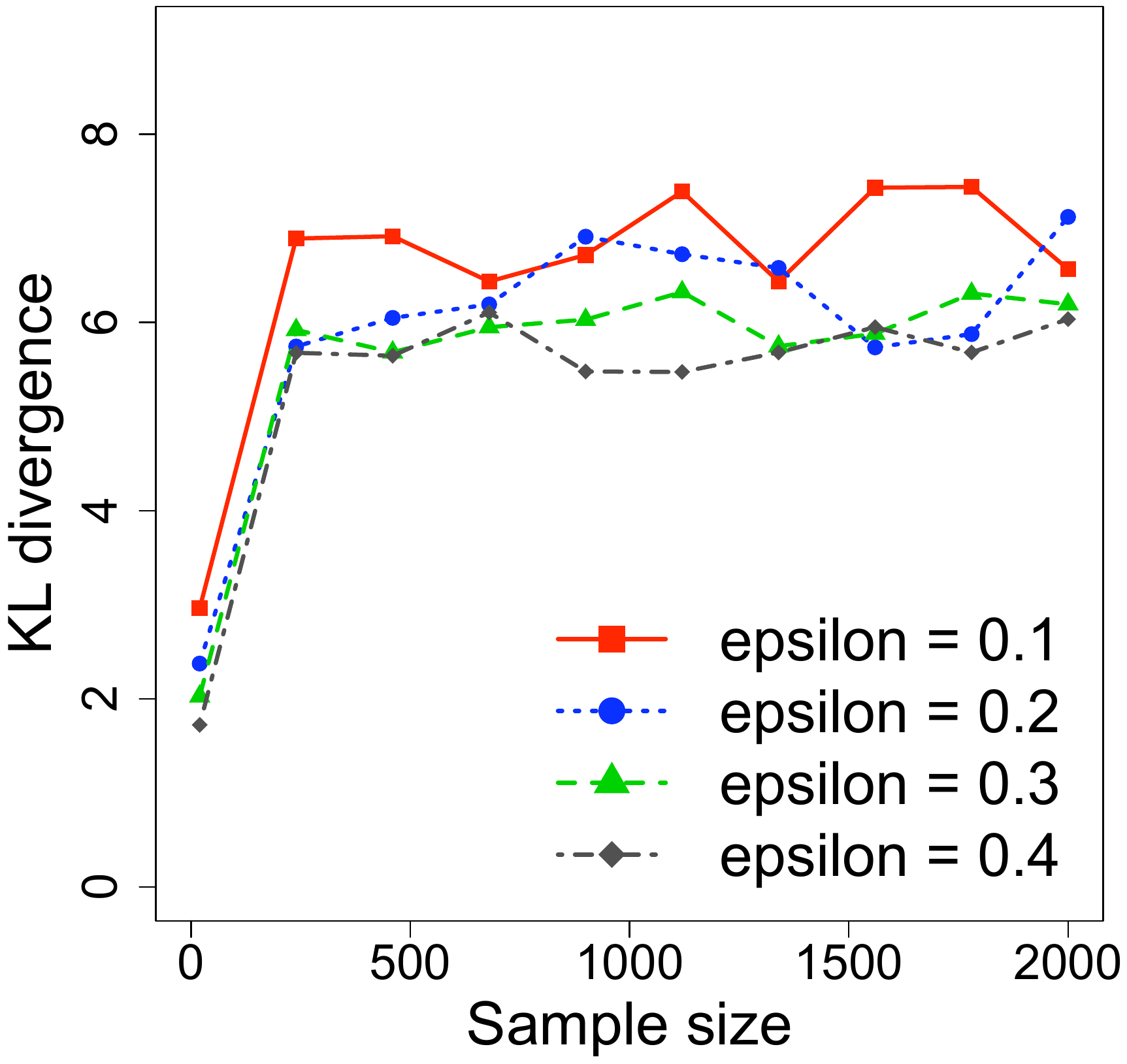}
\includegraphics[scale=0.23]{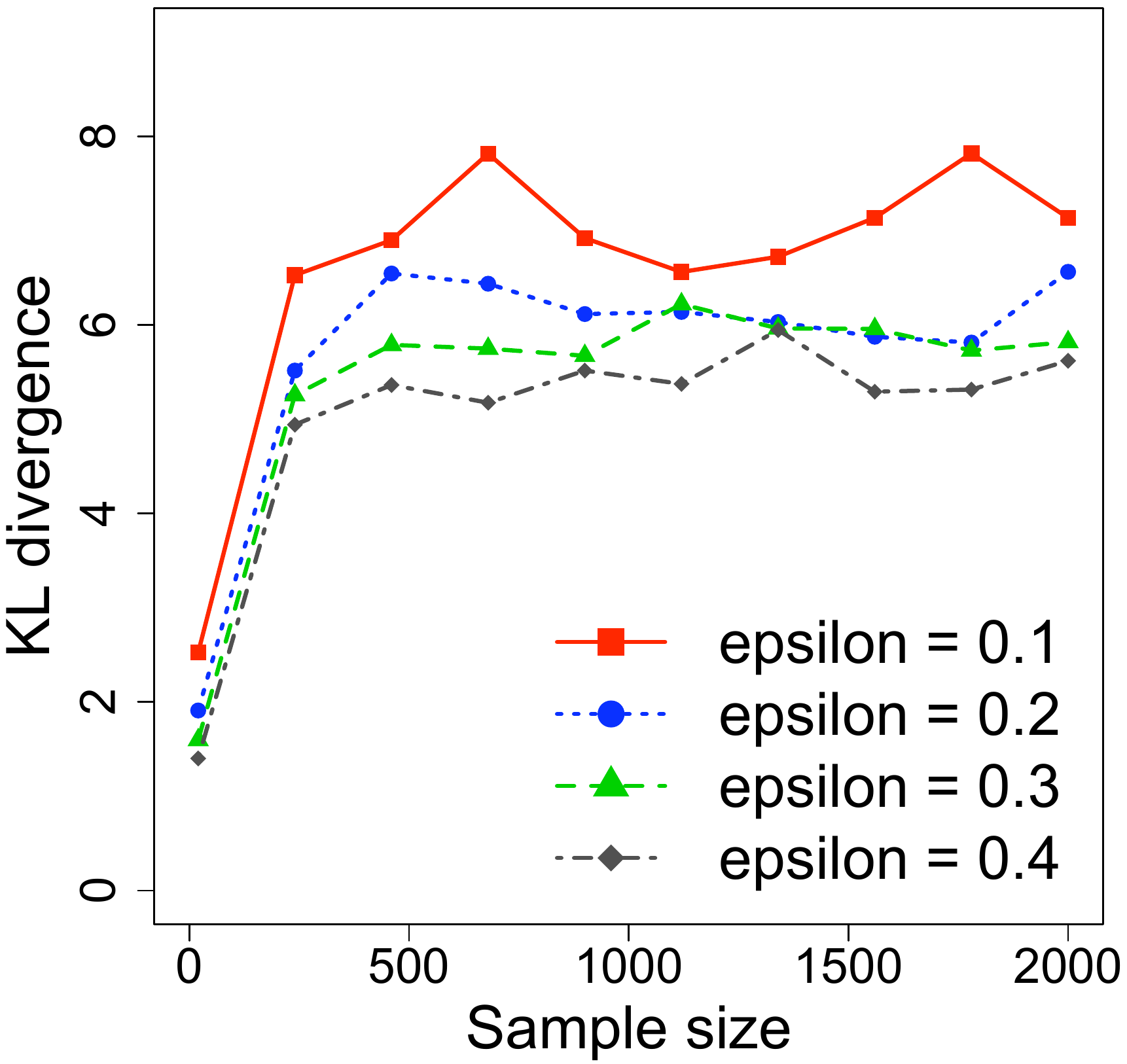}
\includegraphics[scale=0.23]{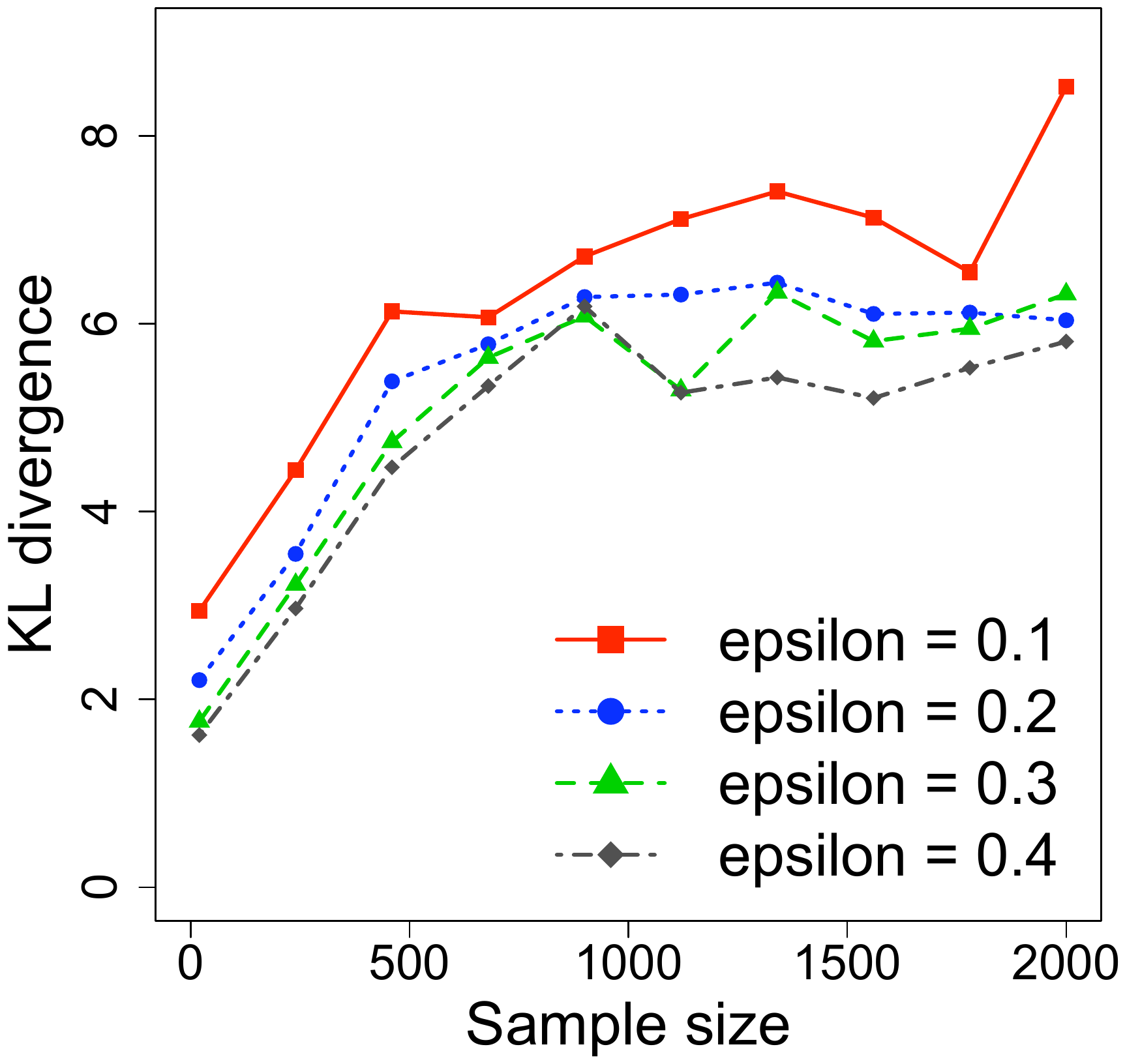}
\caption{KL divergence between the original $p$-values and the private $p$-values based on the frequency table (a) left, (b) middle left, (c) middle right, and (d) right.}
\label{fig_pvalue}
\end{figure}

\subsection{Differentially-Private $p$-Values}

We did a similar analysis on the $p$-values following the proposed release mechanism of adding Laplace noise according to Theorem~\ref{thm_p}. Based on the frequency tables in (\ref{eq_frequencies}), we computed the KL divergence between the original and private $p$-values for increasing $N$ and for four different privacy levels. The resulting plots are shown in Figure \ref{fig_pvalue}. Similarly to the $\chi^2$-statistics, the smaller $\epsilon$, the larger the KL divergence is. However, the relation between the KL divergence and the number of individuals, resp.~the original $\chi^2$-statistic, is reversed since,   for the $\chi^2$-distribution with 2 degrees of freedom, the $\chi^2$-statistic is proportional to the logarithm of the $p$-value. The larger the $\chi^2$-statistic, the smaller the $p$-value and hence the smaller the signal to noise ratio. The jumps in the figures arise because we project the perturbed $p$-values which fall outside the interval $[0,1]$ to $0$ or $1$, respectively. Although there is a one-to-one correspondence between the $\chi^2$-statistics and the $p$-values, the $\chi^2$-statistics have a much smaller KL divergence and are therefore better suited for privacy purposes.


Projecting the $p$-values onto a region of interest as described in Corollary \ref{cor} results in  plots similar to those in Figure \ref{fig_pvalue}; 
the plots  depend on how much smaller the $p$-value under consideration is compared to 1 in the case of Theorem \ref{thm_p} and $p^*$ in the case of Corollary \ref{cor}.


Our analysis and the plots in Figure \ref{fig_pvalue} strongly suggest that perturbing the p-values to achieve $\epsilon$-differential privacy leads to too much noise. Making inference based on such perturbed p-values seems impossible. However, it is a valid question to ask whether there might exist a cut-off which could control the Type I \& Type II errors.

We analyze this question by sampling 500 true positives ($p$-values in $[0,0.05]$) and 500 true negatives ($p$-values in $[0.05,1]$) uniformly and adding Laplace noise with scale $\exp(-\frac{2}{3})/\epsilon$. We represent the simulated data in an ROC plot, where we report for all possible cut-off values the resulting Type I and Type II errors. These plots for four levels of privacy, namely $\epsilon=0.1, 0.2, 0.3, 0.4$ are shown in Figure \ref{fig:ROC_pvalues}. We especially indicate the point corresponding to the usual cut-off of 0.05.

Figure \ref{fig:ROC_pvalues} confirms that using the perturbed $p$-values as a test for independence is not much better than a random test, independent of the chosen cut-off. Choosing a cut-off of 0.05 seems reasonable, but it is anyways impossible to control the Type I \& Type II errors. An interesting feature in the plots are the long straight lines going from both corners along the diagonal. These lines arise since we project the perturbed $p$-values which fall outside the interval $[0,1]$ to either 0 or 1. These plots show again that the perturbed p-values are dominated by these projected 0's and 1's rendering a test based on the perturbed $p$-values uninformative.

\begin{figure}[t]
\centering
\subfigure[$\epsilon=0.1$]{\includegraphics[scale=0.22]{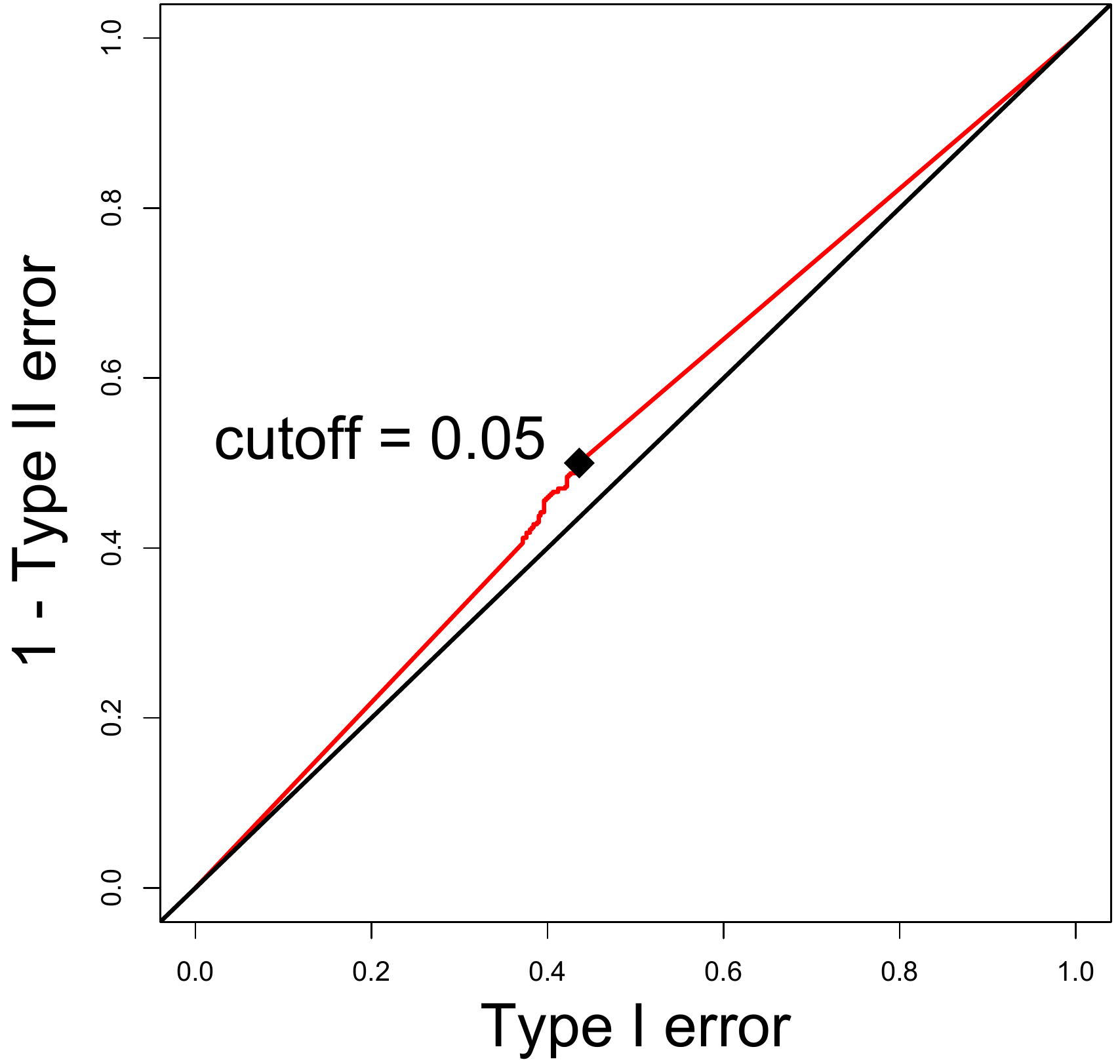}\label{fig:ROC_pvalues:1}} \;
\subfigure[$\epsilon=0.2$]{\includegraphics[scale=0.22]{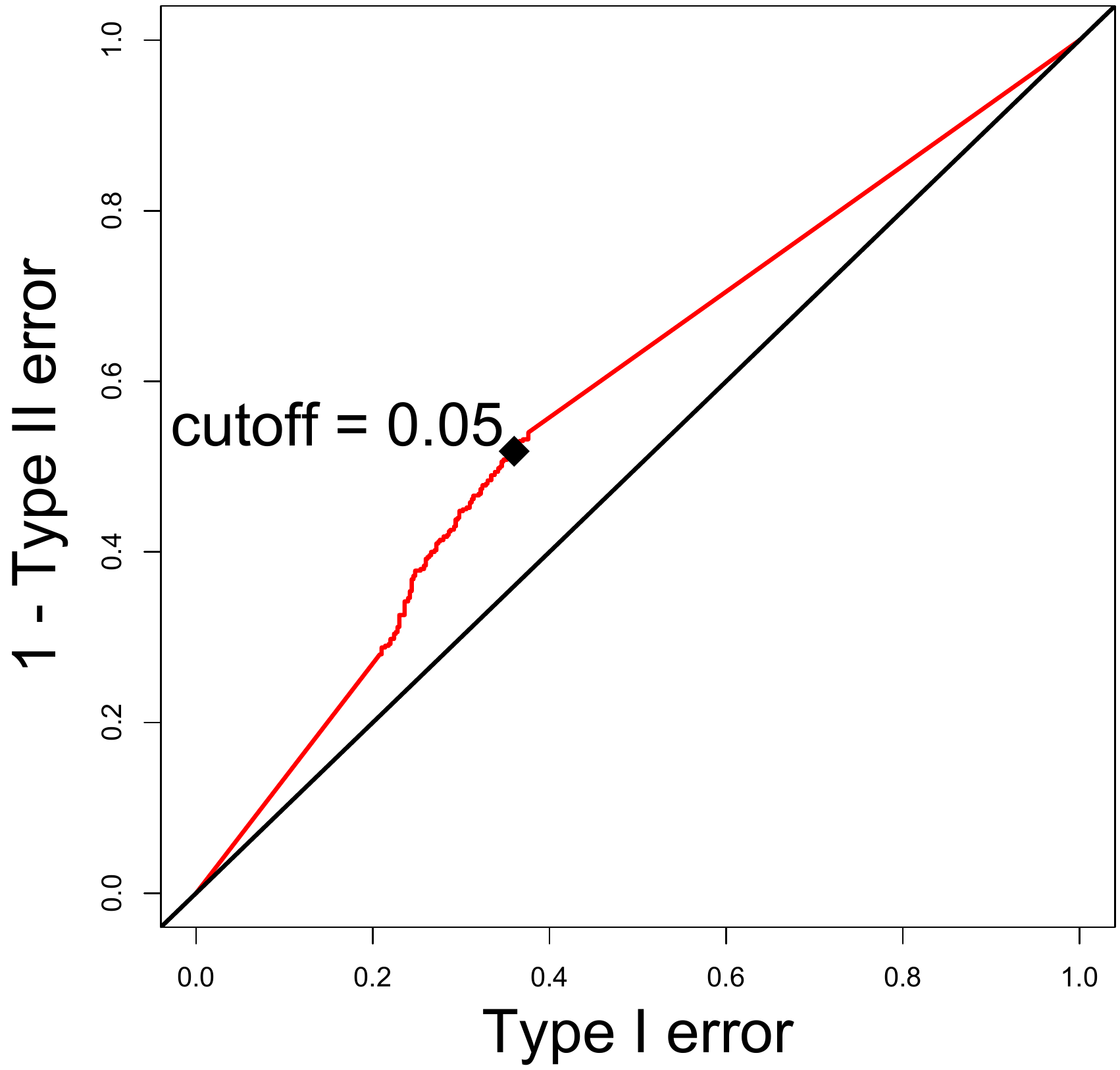}\label{fig:ROC_pvalues:2}} \;
\subfigure[$\epsilon=0.3$]{\includegraphics[scale=0.22]{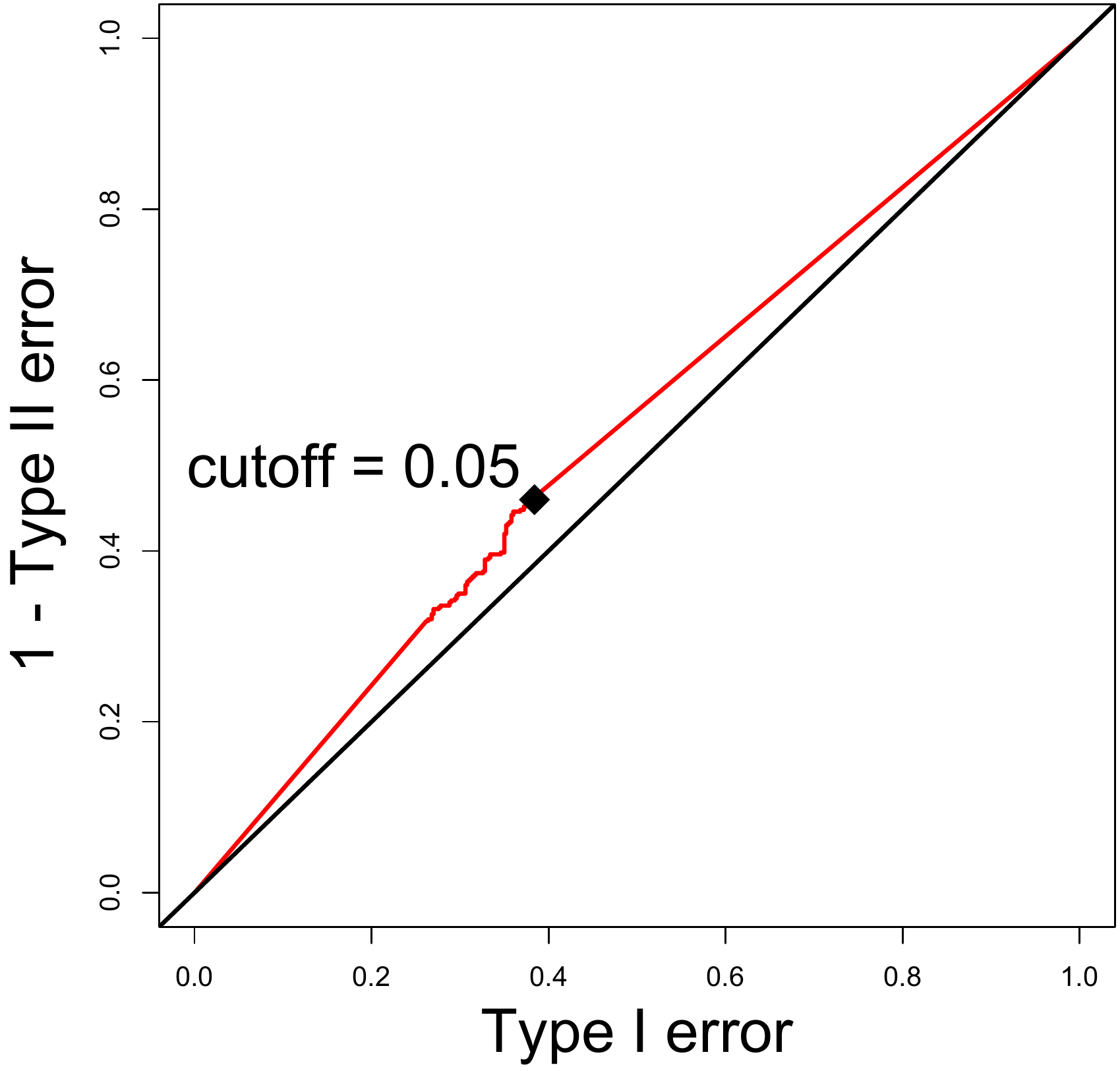}\label{fig:ROC_pvalues:3}} \;
\subfigure[$\epsilon=0.4$]{\includegraphics[scale=0.22]{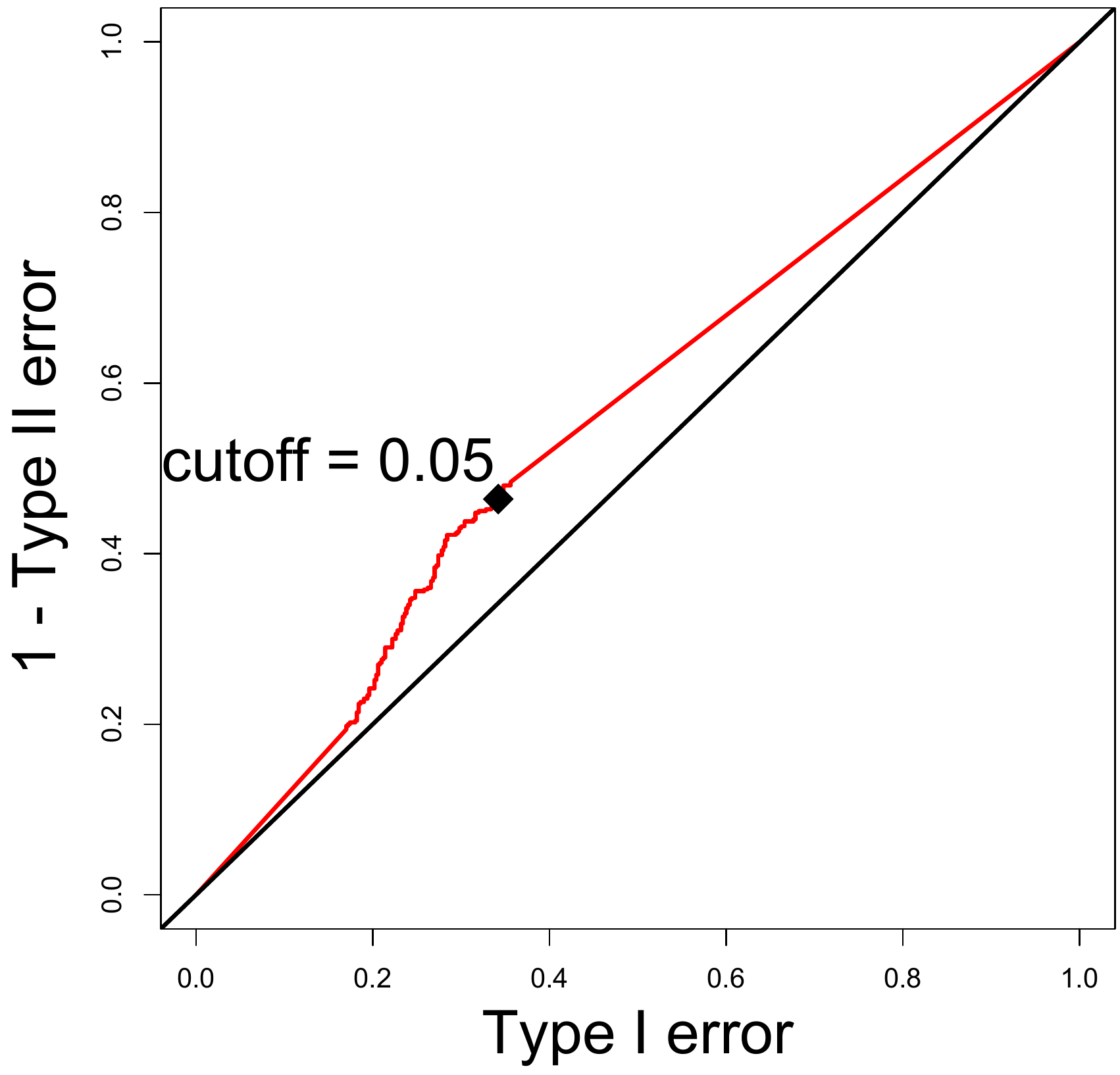}\label{fig:ROC_pvalues:4}}
\caption{ROC curves for the perturbed p-values.}
\label{fig:ROC_pvalues}
\end{figure}

\subsection{Releasing the $M$ Most Relevant SNPs with Respect to a Specific Phenotype}

Practitioners are often interested in finding and releasing the most relevant (i.e., most statistically and practically significant) SNPs. Here we analyze what sample size $N$ is needed in order to recover the two causative SNPs in the HAP-SAMPLE simulations from the private $\chi^2$-statistics. We chose $M=3$ and plotted the frequencies (based on 1,000 private $\chi^2$-statistics) for which one or both of the two causative SNPs were among the three highest ranked private $\chi^2$-statistics computed according to Algorithm~\ref{alg_chi_p}. We performed this analysis for increasing sample size $N$ and for four different privacy levels. We used the simulated HAP-SAMPLE data consisting of around 10,000 SNPs total with two causative SNPs under the additive model with MAF=0.25 and MAF=0.4. The resulting bar charts are shown in Figure \ref{fig_bar_graphs}.

\begin{figure}[!b]
\centering
\includegraphics[scale=0.45]{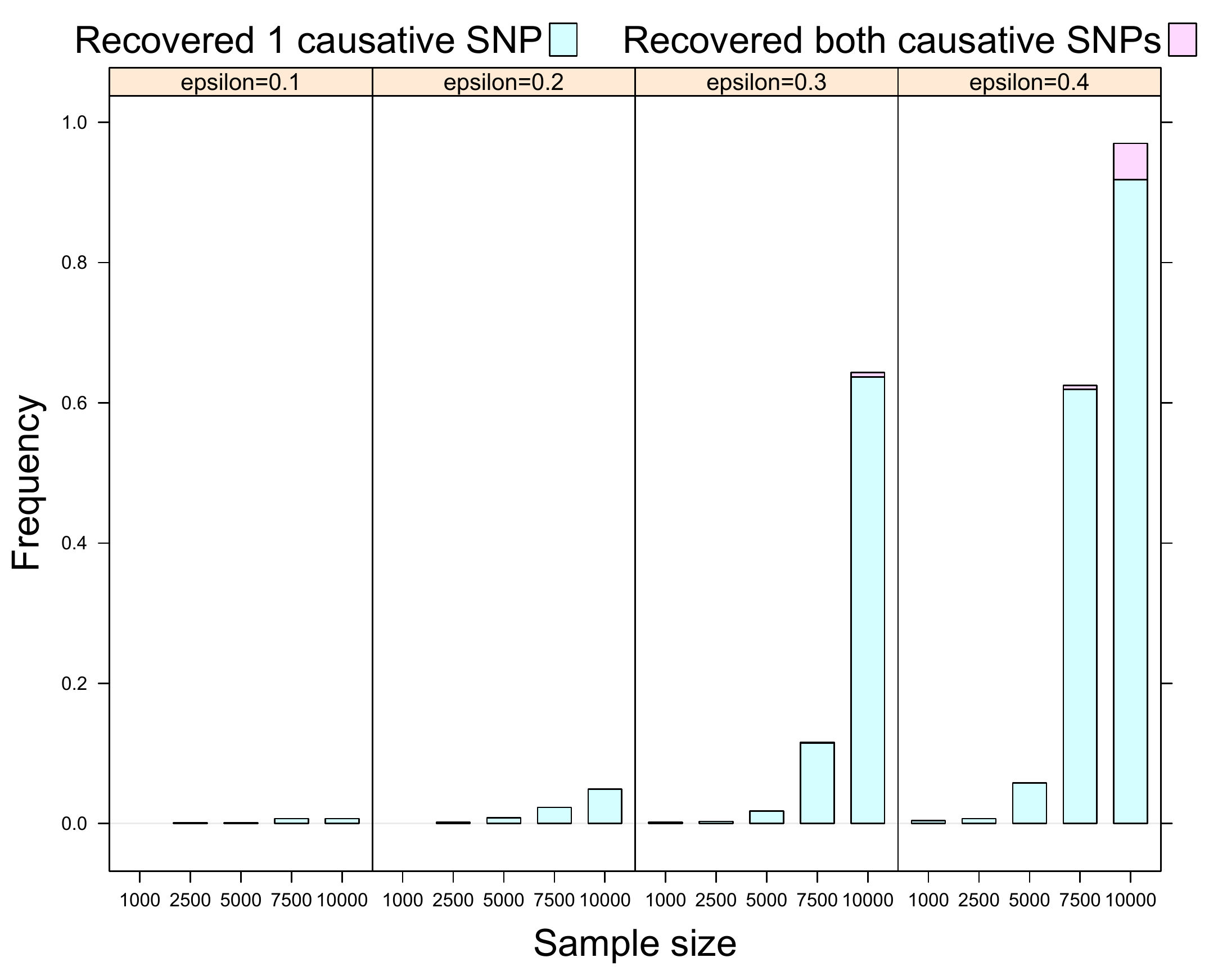}
\includegraphics[scale=0.45]{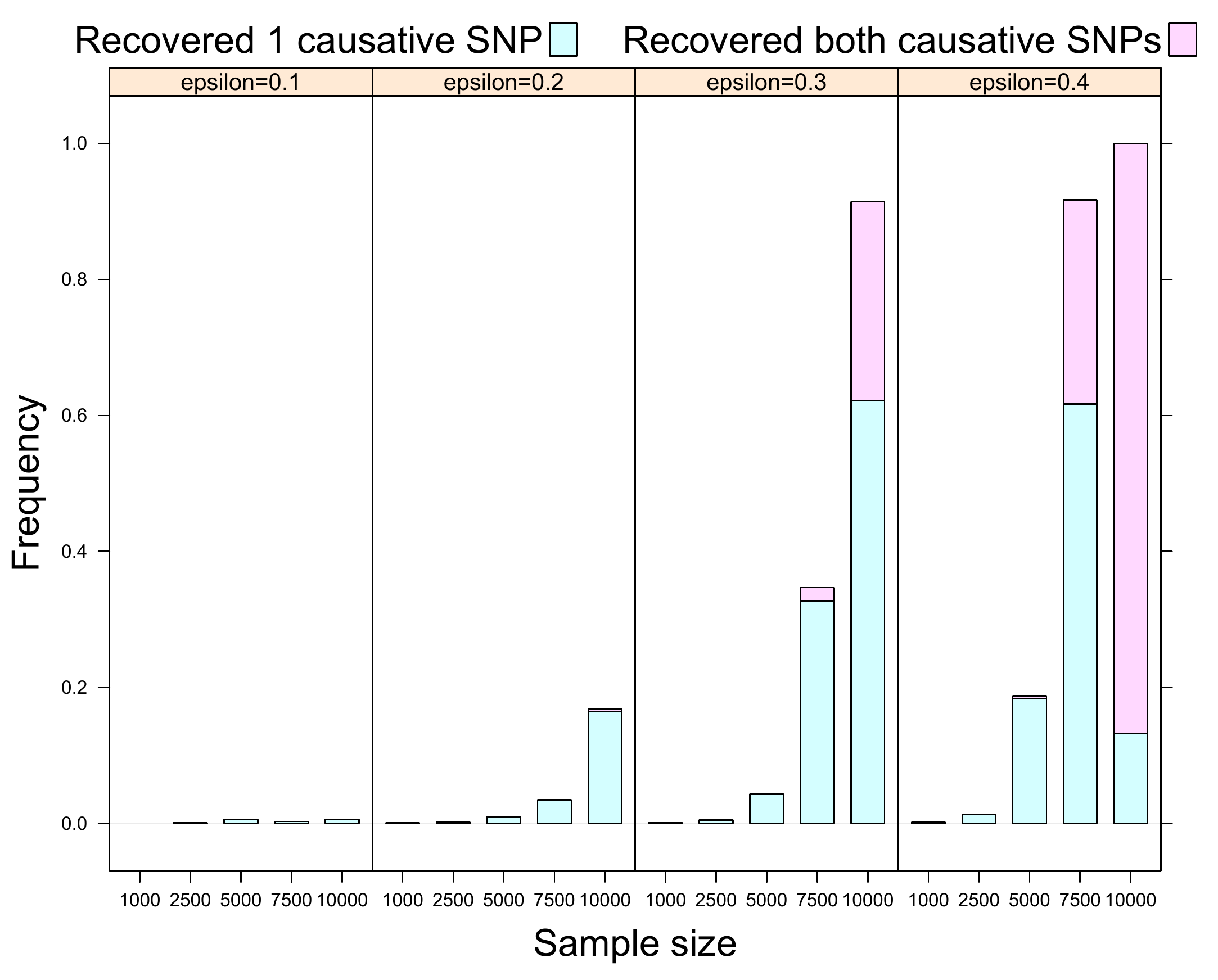}
\caption{Bar charts representing the frequencies for which one  or both of the two causative SNPs were among the three highest ranked private $\chi^2$-statistics under the additive model with MAF=0.25 (top) and MAF=0.4 (bottom).}
\label{fig_bar_graphs}
\end{figure}

As we expect, a larger value of $\epsilon$ (i.e., less noise/less privacy) results in a higher chance of releasing the truly causative SNPs. We also observe that the smaller the MAF, the more data we need to detect the causative SNPs at a fixed level of $\epsilon$. For example, for $\epsilon=0.4$, Figure \ref{fig_bar_graphs} shows that for MAF=0.4 we need about 7,500 individuals to detect the causative SNPs whereas for MAF=0.25 we need about 10,000 individuals. A smaller MAF corresponds to a sparser table, and we are in a similar situation to that described in \cite{Fienberg_2010}, where it is shown that for sparse tables differential privacy requires adding a lot of noise, often with the result of impairing statistical inference. Our results support the traditional trade-off: in order to detect important effects, we need to either relax the privacy constraint or increase the total number of individuals massively. 

An alternative to adding noise to the data we want to release is to add noise to the analysis itself. We explain this approach for  GWAS in the following section.

\section{Extended Work: Differentially-Private Algorithm for Detecting Epistasis}

As we just saw, the sparseness of GWAS data requires an unrealistically large number of individuals in each study or a relaxation of the privacy level. In order to deal with sparseness, methods have been proposed, where the Laplace noise is added to the analysis directly instead of to the output. Another advantage of such an approach is that it allows the analysis of models that integrate information across SNPs. Here we present an $\epsilon$-differentially logistic regression approach that is directly applicable to GWAS.

Most methods for detecting epistasis are based on a two-stage approach. First, all SNPs are filtered e.g.~using $\chi^2$-statistics or $p$-values, to reduce the potential interacting SNPs to a small number. The loci achieving some threshold are then further examined for interactions. A widely used  test for detecting gene-gene interactions on a small number of SNPs is a penalized logistic regression, e.g.~the $L_2$-regularized logistic regression proposed by Park and Hastie \cite{Park}. By adapting the work of Bhaskar et al.~\cite{Bhaskar} and Chaudhuri et al.~\cite{Chaudhuri}, we derive a privacy-preserving method for detecting epistasis, where both stages in the two-stage approach satisfy differential privacy.

We use the first two steps in Algorithm \ref{alg_chi_p} to chose a subset of interesting SNPs of size $M$ in a differentially private way. Park and Hastie~\cite{Park} suggest an $L_2$-regularized logistic regression in order to detect epistasis within a small subset of SNPs.  Chaudhuri et al.  \cite{Chaudhuri} demonstrated how to perturb the objective function for privacy-preserving machine-learning algorithm designs if the loss function and the regularizer satisfy certain convexity and differentiability criteria. In the following, we outline how to apply their objective perturbation in order to find a differentially private algorithm for detecting epistasis.

Let $y=(y_1,\dots , y_N)$ denote the disease status of the N individuals. Note that in this section we encode the diseased status by 1 and the non-diseased status by -1. Let $x_i\in \mathbb{R}^{p+1}$ denote the feature vector for the $i^{\textrm{th}}$ individual. The first entry corresponds to the intercept. The encoding of the features is explained via an example. We will look at a model with two SNPs including their interaction. SNP1 takes the three states 0, 1, and 2, which are encoded by 100, 010, and 001. Similarly for SNP2. The interaction term SNP1$\times$SNP2 takes the states 00, 01, 02, 10, 11, 12, 20, 21, 22 and is encoded by $100000000, 010000000, \dots , 000000001$. So an individual with genotype 12, who is not diseased would have
$$x=(1,0,1,0,0,0,1,0,0,0,0,0,1,0,0,0),\qquad y=-1.$$
Let $K-1$ be the total number of effects in the model (including main and higher-order effects). It is important to note that $|\!|x_i|\!|_2\leq K$. 

The objective function described in Park and Hastie~\cite{Park} is
$$L(\beta)=\sum_{i=1}^{N}\log(1+\exp(-y_i\beta^Tx_i))+\frac{1}{2}\beta^T\Lambda\beta,$$
where $\Lambda$ is of the form $(0,\lambda,\dots ,\lambda)$, i.e.~$\beta_0$ is not penalized.  They use the Newton-Raphson method  for the optimization and forward selection and backward deletion steps based on an Akaike Information Criterion (AIC) or Bayesian Information Criterion (BIC) score to select model size and important factors.

We can apply the approach of Chaudhuri et al. \cite{Chaudhuri} to perturb the objective function such that the algorithm satisfies $\epsilon$-differential privacy. We are interested in the following perturbed objective function:
$$L_{\textrm{priv}}(\beta)=\sum_{i=1}^{N}\log(1+\exp(-y_i\beta^Tx_i))+\frac{1}{2}\beta^T\Lambda\beta+\frac{1}{N}b^T\beta,$$
where b is noise drawn from a distribution with density
$$f(b)=\frac{1}{\alpha}\exp(-k|\!|b|\!|_2)$$
and $k$ is a constant and $\alpha$ the normalizing constant. 

Following the proposal by Park and Hastie~\cite{Park} we make use of forward selection and backward deletion steps based on an AIC or BIC score to select model size; however, we replace the optimization step in their method by Algorithm \ref{alg_log_reg}.

\begin{algorithm}[!b]
\caption{$\epsilon$-Differentially Private Algorithm for Detecting Epistasis}
\begin{algorithmic}
\begin{STATE}
\label{alg_log_reg}

{\bf Input:}
The data vectors $x_i, y_i$, where $i=1,\dots ,N$ and parameters $\epsilon$, $\lambda$, and $c$. 

{\bf Output:}
The output consists of the noisy effects.

\begin{enumerate}
\item[1.] Let $\epsilon'=\epsilon-\log(1+\frac{2cK}{N\lambda}+\frac{c^2K^2}{N^2\lambda^2})$. If $\epsilon'>0$, then $\delta=0$, else $\delta=\frac{cK}{N(e^{\epsilon/4}-1)}-\lambda$ and $\epsilon'=\epsilon/2K$.
\item[2.] Draw $b$ from a distribution with density $f(b)=\frac{1}{\alpha}\exp(-\frac{\epsilon|\!|b|\!|_2}{2})$.
\item[3.] Compute $\beta_{\textrm{priv}}=\textrm{argmin} (L_{\textrm{priv}}(\beta) +\frac{1}{2}\delta|\!|\beta|\!|_2)$.
\end{enumerate}

\end{STATE}
\end{algorithmic}
\label{alg_log_reg}
\end{algorithm}

\begin{thm}
Algorithm \ref{alg_log_reg} is $\epsilon$-differentially private. 
\end{thm}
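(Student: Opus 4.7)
The plan is to reduce the claim to the objective-perturbation theorem of Chaudhuri, Monteleoni, and Sarwate that underlies \cite{Chaudhuri}, by verifying that the logistic loss together with the quadratic penalty $\tfrac12\beta^{T}\Lambda\beta+\tfrac12\delta\|\beta\|_2^{2}$ satisfies its regularity hypotheses with appropriately rescaled constants. The key analytic facts about $\ell(z)=\log(1+e^{-z})$ are convexity, twice differentiability, $|\ell'(z)|\leq 1$, and $\ell''(z)\leq c=1/4$; combined with the feature bound $\|x_i\|_2\leq K$, these yield a per-example gradient contribution of norm at most $K$ and a per-example Hessian contribution of spectral norm at most $cK^{2}$. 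These two constants are precisely where the factors of $K$ enter the definition of $\epsilon'$ in Step~1 of Algorithm~\ref{alg_log_reg}.

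The heart of the proof is the standard change-of-variables argument. Fix neighboring databases $D,D'$ and a putative output $\beta^{\star}$. Because $L_{\mathrm{priv}}+\tfrac12\delta\|\beta\|_2^{2}$ is strongly convex with parameter at least $\lambda+\delta$, the first-order optimality condition implicitly defines a unique $b=b(\beta^{\star};D)$ such that $\beta_{\mathrm{priv}}(D,b)=\beta^{\star}$, and similarly a unique $b'=b(\beta^{\star};D')$. A Jacobian change of variables then yields
\begin{equation*}
\frac{\Pr(\beta_{\mathrm{priv}}=\beta^{\star}\mid D)}{\Pr(\beta_{\mathrm{priv}}=\beta^{\star}\mid D')}
\;=\;\frac{f(b)}{f(b')}\,\cdot\,\frac{\bigl|\det\nabla^{2}L_{\mathrm{priv}}(\beta^{\star};D)\bigr|}{\bigl|\det\nabla^{2}L_{\mathrm{priv}}(\beta^{\star};D')\bigr|}.
\end{equation*}
The difference $b-b'$ is controlled by the per-example gradient sensitivity, so with $b$ drawn from the stated Laplace-type density the first factor contributes at most $e^{\epsilon'}$. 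The two Hessians differ by the sum of two rank-one matrices (one record removed, one added), each of spectral norm at most $cK^{2}$; combined with the lower bound $\lambda+\delta$ on their spectra, the determinant ratio is at most $\bigl(1+\tfrac{cK^{2}}{N(\lambda+\delta)}\bigr)^{2}$. Taking logarithms reproduces exactly the correction $\log\bigl(1+\tfrac{2cK}{N\lambda}+\tfrac{c^{2}K^{2}}{N^{2}\lambda^{2}}\bigr)$ subtracted when defining $\epsilon'$ in Step~1.

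The case split in Step~1 is the engineering required to keep the sum of the two log-contributions under $\epsilon$. When $\lambda$ is large enough that $\epsilon'>0$ the unperturbed objective is already well-conditioned, so $\delta=0$ and the noise scale balances the Hessian term directly; when $\lambda$ is too small, the algorithm injects additional $L_{2}$ regularization of magnitude $\delta=\tfrac{cK}{N(e^{\epsilon/4}-1)}-\lambda$ chosen precisely so that the Hessian factor contributes exactly $\epsilon/2$, and the remaining $\epsilon/2$ budget is absorbed by the rescaled noise. The main obstacle I anticipate is that $\Lambda$ does not penalize the intercept $\beta_0$, so $\Lambda$ alone is only positive semidefinite; to invoke the theorem from \cite{Chaudhuri} verbatim one either needs a mild identifiability condition on the design matrix that lifts $\lambda$ to a strong-convexity bound on all coordinates, or a vanishing auxiliary ridge $\delta_0>0$ on $\beta_0$ that does not affect the final constants. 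The remaining pieces---the logistic smoothness bounds, the Laplace density ratio, and the rank-one Hessian perturbation---are the routine calculations of \cite{Chaudhuri} and should go through once the constants are matched.
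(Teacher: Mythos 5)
Your proposal follows the same route as the paper: the paper's entire proof is a one-line appeal to Theorem 9 of \cite{Chaudhuri} together with the observation that $\|x_i\|_2\leq K$, and what you have written is essentially that theorem's proof unpacked --- verification of the smoothness hypotheses of the logistic loss, the change-of-variables/Jacobian argument, and the case split defining $\epsilon'$. Two points deserve attention. First, your own constants do not match the algorithm as stated: with $\|x_i\|_2\leq K$ and $\ell''\leq c$ the per-example Hessian contribution $\ell''(y_i\beta^Tx_i)\,x_ix_i^T$ has spectral norm at most $cK^2$, so the determinant ratio you derive, $\bigl(1+\tfrac{cK^2}{N(\lambda+\delta)}\bigr)^2$, expands to $1+\tfrac{2cK^2}{N\lambda}+\tfrac{c^2K^4}{N^2\lambda^2}$ and not to the quantity $1+\tfrac{2cK}{N\lambda}+\tfrac{c^2K^2}{N^2\lambda^2}$ appearing in Step 1 of Algorithm \ref{alg_log_reg}; the claimed ``exact'' reproduction therefore does not hold as written, and you would need to either rescale the features by $K$ or carry the $K^2$ through the constants (the paper does not resolve this either, and its Step 1 also writes $\epsilon'=\epsilon/2K$ where \cite{Chaudhuri} has $\epsilon'=\epsilon/2$, and draws $b$ with scale depending on $\epsilon$ rather than $\epsilon'$). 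Second, your observation that $\Lambda=\mathrm{diag}(0,\lambda,\dots,\lambda)$ is only positive semidefinite because the intercept is unpenalized, so the strong-convexity hypothesis of the cited theorem is not literally satisfied, identifies a genuine gap that the paper's citation-only proof silently skips; your proposed repairs (an identifiability condition on the design, or a vanishing auxiliary ridge on $\beta_0$) are the right kind of fix and go beyond what the paper provides.
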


\begin{proof}
The proof follows from Theorem 9 in \cite{Chaudhuri}, and by taking into account  the fact that $|\!|x_i|\!|_2\leq K$ for our application.
\end{proof}

This result allows us to move away from a SNP-by-SNP analysis to an integrated approach without relaxing privacy. Applying this method to  actual GWAS data is part of ongoing work.  

\section{Conclusion}
In this paper, we have demonstrated that it is possible, using the  formal privacy guarantees of differential privacy, for NIH and other  GWAS data repositories as well as ``GWAS data owners" to release at least \mbox{some genetic} data required by practitioners.  
More specifically, we described a privacy-preserving release of aggregate minor allele frequencies and the release of differentially-private $\chi^2$-statistics and $p$-values.  We also provided a differentially private algorithm for releasing these statistics for the most relevant SNPs. 

Our simulations, however, indicate that for bigger and sparse data the release of simple summary statistics is problematic and not sufficient from both privacy and utility perspectives.  The release of summary statistics may be at least in part sufficient for traditional piecewise SNP-by-SNP analysis. More specifically, our results on finite sample properties of differentially-private $\chi^2$-statistics  show that adding noise directly to the $\chi^2$-statistic achieves the best trade-off between privacy and utility in comparison to adding noise to the $p$-values or cell entries themselves, in particular for tables with small to moderate counts and overall samples size. 
However, we require more complex methodology  to deal with more sparse data and models that integrate across SNPs to detect epistasis. To address this problem, we outlined an $\epsilon$-differentially private algorithm for a specific form of penalized logistic regression. 
This is but one of the newer methods being introduced into the statistical literature for GWAS, but we expect that the general strategy suggested here might be adaptable for other statistical methods, e.g., for sparse partitioning \cite{speed2011sparse}.

Since the introduction of differential privacy by \cite{Dwork}, and in particular $\epsilon$-differential privacy, many additional variations along with their considerations with respect to statistical analysis have been proposed (e.g.,  more recently \cite{Hardt:2010fk}). To further improve the privacy-utility tradeoffs for GWAS, the future research would consider such alternate mechanisms. 





\section*{Acknowledgment}
This research was supported in part by NSF Grants BCS-0941553 and BCS-0941518
 to Pennsylvania State University and Carnegie Mellon University, respectively.





\bibliographystyle{abbrv}
{\bibliography{IEEEabrv,references_GWAS}
}

\end{document}